\documentclass[journal]{IEEEtran}

\usepackage{cite}
\usepackage{amsmath}
\usepackage{bm}
\usepackage{graphicx}
\usepackage{amsfonts}
\usepackage{amssymb}
\usepackage{hyperref}

\usepackage[capitalize]{cleveref}
\usepackage{comment}
\usepackage{multirow}
\usepackage[font=footnotesize, justification=raggedright]{caption}
\usepackage[font=footnotesize]{subcaption}
\usepackage[english]{babel}
\usepackage{color}
\newtheorem{theorem}{Theorem}
\begin{document}
%
\title{Performance of QUBO-Formulated MIMO Detection Under Hardware Precision Constraints}
%
%
%

\author{\IEEEauthorblockN{Seyedkhashayar~Hashemi, 
        \textit{Graduate~Student~Member}, \textit{IEEE}, 
        Elisabetta~Valiante,
        Ignacio~Rozada,
        and~Moslem~Noori}\\
\thanks{S. Hashemi and E. Valiante contributed equally to this work. Corresponding author: \textit{elisabetta.valiante@1qbit.com}.}%
\thanks{S. Hashemi was with 1QB Information Technologies (1QBit), Vancouver, Canada, and also with the Faculty of Engineering, University of Alberta, Canada. E. Valiante, I. Rozada, and M. Noori are with 1QBit, Vancouver, Canada.}
}

%


\maketitle

\begin{abstract}
The evolution of multiple-input, multiple-output (MIMO) systems requires the efficient detection algorithms to overcome the exponential computational complexity of optimal maximum likelihood detection. Reformulating MIMO detection as a quadratic unconstrained binary optimization (QUBO) problem enables the use of highly parallel, physics-inspired, hardware-accelerated solvers and non-von Neumann architectures. However, embedding continuous-valued QUBO coefficients into hardware introduces quantization noise due to finite precision, which can severely degrade detection accuracy. This paper presents a rigorous analysis of the performance impact of finite-precision, hardware-accelerated QUBO solvers in MIMO detection. We analytically derive the probability distribution functions of the QUBO matrix entries and introduce novel homogeneous and heterogeneous quantization schemes based on either instantaneous channel state information or its statistical features. We further derive a sufficient condition on the precision required to maintain the optimal solution after quantization. Extensive numerical experiments, across various MIMO system sizes and modulation orders (up to 256-QAM), show that heterogeneous quantization matches the full-precision baseline bit error rate using significantly fewer bits than homogeneous approaches. We provide hardware-aware guidelines for selecting the optimal quantization strategy.
\end{abstract}

\begin{IEEEkeywords}
MIMO detection, Hardware-accelerated QUBO solvers, Quantization, Hardware precision.
\end{IEEEkeywords}

%
\IEEEpeerreviewmaketitle

\section{Introduction}
\IEEEPARstart{T}{he} evolution of wireless communication systems in the last three decades has been driven by a growing demand for higher data rates and improved spectral efficiency. From the early days of single-antenna systems to the massive arrays deployed in a 5G network and envisioned for 6G, the underlying physical layer's technology has fundamentally shifted toward multiple-input, multiple-output (MIMO) systems as it offers significant improvement in spectral efficiency. This improvement comes at the price of increased computational complexity in solving the MIMO detection problem, that is, the problem of recovering transmitted data from a noisy linear combination of received signals at the receiver's antenna array.

The optimal detection strategy, called ``maximum likelihood detection'' (ML detection), minimizes the error rate of the detected signal by exhaustively searching the entire search space~\cite{Verdu1998}. However, the computational complexity of ML detection grows exponentially with the number of antennas and the modulation order, rendering it infeasible for practical, delay-sensitive systems. Consequently, research has focused on near-optimal, low-complexity detectors, and several suboptimal detection methods have been proposed over the years. For example, linear methods, such as zero-forcing (ZF), matched filter, and minimum mean square error (MMSE), offer low complexity~\cite{Paulraj2003} and close-to-optimal performance when the system is underloaded, which means that the number of receiver antennas is much larger than the number of transmitter antennas.
In a fully loaded system, where the number of transmitters is almost equal to the number of receiver antennas, the channel matrix becomes ill-conditioned, and the performance of the linear methods drops noticeably. In this case, nonlinear methods such as sphere decoding~(SD)~\cite{Viterbo1999}, successive interference cancellation~(SIC)~\cite{Foschini1996, Shlezinger2020}, tree search~\cite{Studer2010}, and lattice reduction (LR)~\cite{Wubben2004} can be used. These methods offer improved performance over the linear methods, at the cost of higher complexity. The complexity grows with the number of antennas, making the nonlinear detection methods computationally expensive for fully loaded massive MIMO systems. 

Recently, a paradigm shift has emerged to overcome the computational challenges of MIMO detection with the advent of physics-inspired computing, where the ML detection problem is converted to a combinatorial optimization problem solvable by Ising machines. By mapping the error minimization function to an Ising Hamiltonian or a quadratic unconstrained binary optimization (QUBO) problem, MIMO detection can be reformulated as the problem of finding the ground state of a physical system~\cite{Kim2019}. For higher-order modulation schemes (e.g., 16-QAM or 64-QAM), the problem naturally expands into a higher-order polynomial unconstrained binary optimization (PUBO) formulation~\cite{Kim2019,Zeng2025}. Although the PUBO formulation captures the complex variable interactions of the problem more naturally, most current hardware devices are restricted to pairwise (i.e., quadratic) interactions. 

Physics-inspired heuristics for MIMO detection have been implemented in traditional computing environments (i.e., on CPUs and GPUs)~\cite{Kim2021}, and various non-von Neumann architectures have been proposed to solve these formulations efficiently. These include quantum annealers (such as D-Wave Systems's devices)~\cite{Kim2019}, coherent Ising machines (CIM) based on optical parametric oscillators~\cite{Singh2022}, and \mbox{p-bit-based} detectors either simulated~\cite{Zeng2025} or implemented on \mbox{ASICs/FPGAs~\cite{Sajeeb2026, Zhu2026}.} These solvers offer the potential for massive parallelism and improved scaling, but they each face their own unique implementation hurdles regarding connectivity, analog noise, and, above all, the precision required to embed problems on them. 

In physical Ising machines, the problem coefficients are encoded in physical quantities, such as magnetic fields, optical amplitudes, or electrical conductances, that are inherently subject to noise, fabrication variability, and control limitations. The resolution with which these machines can distinctively encode energy differences determines their ``effective bit-width'' and limits the available precision to encode a problem's coefficients. For example, the precision of \mbox{D-Wave} Systems's hardware is a function of the entire control chain, from the room-temperature electronics to the millikelvin cryostat environment, and can be quantified to be approximately 4 to 5~bits~\cite{DWave}. In a CIM, the coupling coefficients are stored in the digital memory of the FPGA that performs feedback during the optimization process, so the coefficients' precision is bounded by the resolution and the memory bandwidth of the converters that interface the digital calculation with the analog optical network, typically reaching 6 to 14~bits~\cite{Honjo2021}. In memristor crossbar arrays, the coefficients are encoded directly in the physical conductance of a metal-oxide filament at the intersection of two wires, and the calculations are performed instantly via Kirchhoff's and Ohm's laws. The maximum achievable precision is about 11~bits, but more practical implementations operate with 4 to 8~bits of \mbox{precision~\cite{Shapero2013, Mahmoodi2018, Rao2023}.}  

Since the available precision on physical Ising machines is often very limited compared to that of traditional computers, the formulated QUBO problem has to be quantized to accommodate the limited precision. This introduces quantization noise, which can severely degrade detection performance if not properly managed.
To this end, it is important to answer two questions: i) Is such a limited precision sufficient to correctly perform MIMO detection by solving the quantized QUBO problem? ii) How can the effective precision of the Ising machine be increased to remedy the quantization noise? Although there have been previous studies to address the second question (see, for example, Ref.~\cite{Singh2024}), a response to the first question and a complete study of the effects of quantization on the QUBO formulation of the MIMO detection problem have not yet been addressed in the literature.   

In this paper, we present a rigorous and extensive analysis of the effect of quantization on the performance of QUBO solvers for MIMO detection. To this end, we first derive the statistical properties of the full-precision QUBO matrix coefficients. These properties are then used to propose statistical quantization schemes in addition to quantization schemes based on QUBO matrix realizations. We then analyze how quantization could affect the optimal solution of the QUBO problem and derive a sufficient condition to guarantee that the optimal solution of the full-precision QUBO problem is preserved in the quantized QUBO problem. To validate our analysis, we present the results of numerical experiments with a QUBO solver and evaluate how the quantization schemes affect the solver's performance. Finally, we propose some guidelines for choosing the best quantization scheme and precision to apply to save resources without degrading the solver's performance.  
 
\section{MIMO Detection: System Model and Problem Formulation}\label{sec: preliminaries}
\subsection{System Model}
We consider an uplink communication link between the $N_t$ transmit antennas and the $N_r$ receive antennas. The received complex-valued signal vector $\mathbf{y} \in \mathbb{C}^{N_r}$ at the receiver is
\begin{equation}
\label{received_sig}
    \mathbf{y} = \mathbf{Hx} + \mathbf{n},  
\end{equation}
where $\mathbf{H} \in \mathbb{C}^{N_r\times N_t}$ is the Rayleigh fading channel matrix between the transmitter(s) and the receiver, $\mathbf{x} \in \mathcal{S}^{N_t}$ is the vector of $N_t$ independently transmitted symbols with $\mathcal{S}$ being the set of all symbols in the modulation's constellation, and $\mathbf{n} \in \mathbb{C}^{N_r}$ is the vector of additive white gaussian noise (AWGN) with a variance of $\sigma_n^2$ and a power of $N_0$ on the receiver side. The quantities $\mathbf{y}$ and $\mathbf{H}$ are assumed to be known by the receiver. 

For modulation, we assume QAM with modulation order $M$; therefore, $\mathcal{S} \subseteq \mathbb{C}$ and $|\mathcal{S}|=M$. The number of bits transmitted by each symbol is $r=\log_2 M$. Here, we consider that $M$ is an even power of two and the QAM constellation has a standard square shape as shown in \cref{fig:M-QAM}. We set the minimum distance between two symbols in the constellation to be $d_{\rm min}=\sqrt{\frac{6}{M-1}}$ to normalize the average energy of the transmitted QAM signal to unity. This normalization ensures that all modulations transmit with the same power per symbol, allowing a fair comparison of performance, and that the energy per bit $E_{\rm b}$ is always $1/r$. The signal-to-noise ratio is defined as $10 \log (E_{\rm b}/N_0)$, and is measured in decibels.

\begin{figure}[t]
    \centering
    \includegraphics[width =1\linewidth, trim={7.5cm 5.5cm 13cm 6cm},clip]{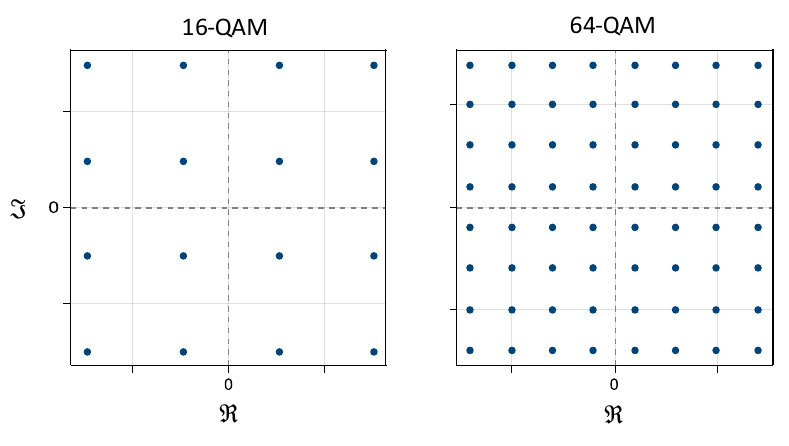}
    \caption{Examples of $M$-QAM constellations with $M=16$ (left) and $M=64$ (right). The minimum distance between symbols is $\sqrt{\frac{6}{M-1}}$, such that the constellation's average power is normalized to unity.}
    \label{fig:M-QAM}
\end{figure}

\subsection{Problem Formulation}
On the receiver side, the goal of MIMO detection is to determine $\mathbf{x}$ such that the objective function
\begin{equation}
\label{eq:MIMO_obj}
\arg \min_{\mathbf{x}\in \mathcal{S}^{N_t}} \| \mathbf{y} - \mathbf{Hx}\| ^2
\end{equation}
is minimized. The problem defined by expression~\eqref{eq:MIMO_obj} is referred to as the maximum likelihood (ML) detection problem, having a search space size of $\mathcal{S}^{N_t}$, which grows exponentially with the number of transmit antennas, making the ML problem computationally intractable. In the section that follows, we will discuss how the objective in expression~\eqref{eq:MIMO_obj} can be reformulated into a QUBO format. 

\section{MIMO Detection: QUBO Formulation} \label{sec: QUBO formulation}
A general QUBO problem can be defined by an upper triangular matrix $\mathbf{Q} \in \mathbb{R}^{N \times N}$ and formulated as
\begin{equation}
\label{eq:qubo_prob}
\min_{\mathbf{q}} \mathbf{q}^T \mathbf{Q} \ \mathbf{q} =   \sum_{i=1}^N\sum_{j\geq i}^N Q_{i,j}q_{i}q_{j},
\end{equation}
where $\mathbf{q} = [q_1, q_2,\dots, q_N]$, with $q_i \in \{0,1\}$, is the vector of binary variables of the QUBO problem and $Q_{i,j}$ is the element of $\mathbf{Q}$ located in the $i$-th row and the $j$-th column. Since $q_i^2$ is equal to $q_i$, when $i=j$, $Q_{i,j}q_{i}q_{j}$ can be written as $Q_{i,i}q_{i}$. These terms are associated with the diagonal entries of $\mathbf{Q}$ and are referred to as linear terms. The terms associated with the off-diagonal entries of $\mathbf{Q}$ are referred to as quadratic terms.

The formulation of the MIMO detection problem in expression~\eqref{eq:MIMO_obj} is a quadratic function of $\mathbf{x}$. Thus, if a linear mapping can be found that expresses $\mathbf{x}$ as a function of binary variables, the MIMO detection problem becomes a QUBO problem as a function of binary variables. To find such a linear mapping, let us consider the $i$-th element of $\mathbf{x}$, denoted by $x_i$. Defining a binary vector $\mathbf{q} \in \{0, 1 \}^{rN_t}$, we use $\{q_{(i-1)r+1}, \ldots, q_{ir}\}$, where $i \in \{1 , 2, \ldots, N_t\}$, to separately encode the in-phase (real) and quadrature (imaginary) components of $x_i$ as
\begin{equation} \label{eq:x_to_q}
x_i = T(q_{(i-\frac{1}{2})r+1},\dots,q_{ir}) + \textbf{j}\,T(q_{(i-1)r+1},\dots,q_{(i-\frac{1}{2})r}),
\end{equation}
where, for any $r/2$ arbitrary binary variables $\{ q'_1, q'_2, \ldots, q'_{\frac{r}{2}} \}$, $T(\cdot)$ is defined as
\begin{equation}
\begin{aligned}
\label{eq:linear_mapping}
&T(q'_1, q'_2, \ldots, q'_{\frac{r}{2}}) \\
&= \frac{d_{\min}}{2} \left[2(2^{\frac{r}{2}}q'_{\frac{r}{2}} + \ldots+2q'_2+q'_1) - (\sqrt{M}-1)\right].
\end{aligned}
\end{equation}
Note that $T(\cdot)$ allows for a simple linear mapping between symbols and variables, and it is different from Gray coding, which does not result in a linear mapping~\cite{Kim2019}.

Substituting each element of $\mathbf{x}$ in  expression~\eqref{eq:MIMO_obj} with its equivalent binary representation from \cref{eq:x_to_q},  we have
\begin{align} \label{eq:MIMO_to_QUBO} \nonumber
\Vert \mathbf{y} - \mathbf{Hx} \Vert ^2 &= (\mathbf{y}-\mathbf{Hx})^H(\mathbf{y}-\mathbf{Hx})\\ \nonumber 
&=\sum_{k=1}^{N_r}(y_k-(Hx)_k)^* (y_k-(Hx)_k)\\ \nonumber
&=\sum_{k=1}^{N_r} (c_k+W_{k,1}q_1+\ldots+W_{k,rN_t}q_{rN_t})^* \\ 
& \quad \cdot (c_k+W_{k,1}q_1+\ldots+W_{k,rN_t}q_{rN_t}),
\end{align}
where $(\cdot)^H$ denotes the conjugate transpose and $(\cdot)^*$ denotes the conjugate, $(Hx)_k$ refers to the $k$-th element of $\mathbf{H}\mathbf{x}$, 
\begin{align}
\label{eq:MIMO_to_QUBO_parameters} 
c_k&=y_k-\frac{d_{\min}}{2}\cdot(-(\sqrt{M}-1))\cdot(1+\textbf{j})\cdot\sum_{l=1}^{N_t}H_{k,l},
\end{align}
and
\begin{align}
\label{eq:MIMO_to_QUBO_parameters_2}
\nonumber
W_{k,m}\!&=\!-H_{k,\lfloor m/r \rfloor} \, e^{\, \textbf{j}\frac{\pi}{2} \lfloor\frac{(m \!\!\!\! \mod \! r)}{r/2} \rfloor} \\ 
& \quad \quad \quad \cdot 2^{[(m \!\!\!\! \mod \! r) \!\!\!\! \mod \! \frac{r}{2}]+1}\cdot \frac{d_{\min}}{2}.
\end{align}
As a result, the diagonal and off-diagonal entries of the QUBO matrix are
\begin{equation} \label{eq:QUBO_matrix_entries}
Q_{i,j} =
\begin{cases}
\displaystyle \sum_{k=1}^{N_r} \left( c_k^* W_{k,i} + c_k W_{k,i}^* + |W_{k,i}|^2 \right), & i = j, \\[10pt]
\displaystyle \sum_{k=1}^{N_r} \left( W_{k,i}^* W_{k,j} + W_{k,j}^* W_{k,i} \right), & i \ne j.
\end{cases}    
\end{equation}

\section{Probability Distributions of the QUBO Matrix's Entries}
\label{sec:qubo_distr}
In this section, we derive the probability distribution functions (PDF) of the entries of the QUBO matrix. These distributions will be used in the section that follows to analyze the performance of different quantization schemes.

We treat the diagonal (linear) and off-diagonal (quadratic) entries of the QUBO matrix separately. First, we present the following theorem for the diagonal entries.

\vspace{0.5 em}
\begin{theorem}\label{theorem:diagonal}
The probability distribution of the $i$-th diagonal entry of the QUBO matrix, $Q_{i,i}$, can be approximated by 
\begin{equation}
\begin{aligned}
\label{eq:QUBO_PDF_diag}
Q_{i,i}\sim &\ \mathcal{N}[\left(2\cdot N_r \cdot \Re(g_i)\right), \\ 
& \ \ (2 (|a_i|^2\cdot (N_t-1) \cdot 2\cdot \frac{2\sqrt{M}-1}{\sqrt{M}+1}\cdot N_r\\ 
& \ \ \ \ + N_r\cdot \sigma^2_n\cdot|a_i|^2 + N_r (f_i - |g_i|^2))],
\end{aligned}
\end{equation}
where $\mathcal{N}[\mu, \sigma]$ refers to a normal distribution with mean $\mu$ and standard deviation $\sigma$, and 
\begin{align}
    a_i & = -e^{\,\textbf{j}\frac{\pi}{2} \lfloor\frac{i \!\!\!\! \mod \! r}{r/2} \rfloor} \cdot 2^{[(i \!\!\!\! \mod \! r) \!\!\!\! \mod \! \frac{r}{2}]+1} \cdot \frac{d_{\min}}{2}, \\
    g_i & = \left(a_i\left[\sqrt{\frac{6}{M-1}}\cdot\frac{(\sqrt{M}-1)}{2} \cdot (1-\textbf{j})\right] +\frac{|a_i|^2}{2}\right)\!, \\ \nonumber
    f_i &=\left(|a_i|^2 \cdot 2\cdot \frac{2\sqrt{M}-1}{\sqrt{M}+1} + \frac{|a_i|^4}{4} \right) \\ 
    & \,\,\, + 4 \left(\Re\left[a_i\cdot\frac{|a_i|^2}{2}\cdot  \sqrt{\frac{6}{M-1}}\cdot\frac{(\sqrt{M}-1)}{2} \cdot (1-\textbf{j}) \right]\right)\!\!. 
\end{align}
\end{theorem}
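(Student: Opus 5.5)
The plan is to write $Q_{i,i}$ as a sum over receive antennas $k$ of i.i.d. terms and apply the central limit theorem. By \cref{eq:QUBO_matrix_entries},
\begin{equation*}
Q_{i,i}=\sum_{k=1}^{N_r} Z_k,\qquad Z_k = 2\Re(c_k^* W_{k,i}) + |W_{k,i}|^2 .
\end{equation*}
From \cref{eq:MIMO_to_QUBO_parameters_2} we can factor $W_{k,i}=H_{k,l}\,a_i$, where $l$ is the antenna index encoded by bit $i$ and $a_i$ is the deterministic factor in the statement. We take $H_{k,l}\sim\mathcal{CN}(0,1)$ i.i.d. (Rayleigh fading), the symbols $x_m$ uniform on $\mathcal{S}$, and $n_k\sim\mathcal{CN}(0,\sigma_n^2)$, all mutually independent. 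Then the rows $(H_{k,\cdot},n_k)$ are i.i.d. across $k$ given the transmitted $\mathbf{x}$. Hence the $Z_k$ are i.i.d., and for moderate $N_r$ we approximate $Q_{i,i}\approx\mathcal{N}\bigl(N_r\,\mathbb{E}[Z_k],\,N_r\mathrm{Var}(Z_k)\bigr)$.

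The first step is to rewrite $c_k$ so that the dependence on $H_{k,l}$ is isolated. Substituting $y_k=\sum_m H_{k,m}x_m+n_k$ into \cref{eq:MIMO_to_QUBO_parameters} gives
\begin{equation*}
c_k=\sum_{m}H_{k,m}\tilde x_m + n_k,\qquad \tilde x_m = x_m+\tfrac{d_{\min}}{2}(\sqrt{M}-1)(1+\textbf{j}).
\end{equation*}
We then split this into the term $m=l$ and the $N_t-1$ interfering terms $m\neq l$. The term $m=l$ produces $|H_{k,l}|^2$-type contributions, which are the only ones with nonzero mean. The interfering terms and the noise are zero-mean and independent of $H_{k,l}$. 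Consequently
\begin{equation*}
\mathbb{E}[Z_k]=\mathbb{E}|H_{k,l}|^2\bigl(2\Re(\tilde x_l^* a_i)+|a_i|^2\bigr)=2\Re(g_i).
\end{equation*}
This requires evaluating $\tilde x_l$ at the average constellation offset, which gives the factor $\sqrt{6/(M-1)}(\sqrt M-1)/2\,(1-\textbf{j})$ after conjugation. It also requires checking that the averaging over $x_l$ matches the authors' convention.

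For the variance, we decompose $Z_k = A_k + B_k + C_k$:
\begin{itemize}
\item $A_k = 2\Re(H_{k,l}^*\tilde x_l^* H_{k,l} a_i) + |H_{k,l}|^2|a_i|^2$ is the self term.
\item $B_k = 2\Re\bigl(\sum_{m\ne l}H_{k,m}^*\tilde x_m^* H_{k,l}a_i\bigr)$ is the interference term.
\item $C_k = 2\Re(n_k^* H_{k,l} a_i)$ is the noise term.
\end{itemize}
The cross covariances vanish because $B_k$ and $C_k$ are zero-mean and linear in independent circular variables. We then compute each piece in turn:
\begin{itemize}
\item $\mathrm{Var}(C_k)=2\sigma_n^2|a_i|^2$.
\item $\mathrm{Var}(B_k)=2(N_t-1)|a_i|^2\,\mathbb{E}|\tilde x_m|^2$. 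Here $\mathbb{E}|\tilde x_m|^2$ is the second moment of the shifted constellation. The shift maps each PAM coordinate to $\{0,d_{\min},\dots,(\sqrt M-1)d_{\min}\}$, and summing squares gives $2\cdot\frac{2\sqrt M-1}{\sqrt M+1}$ after using $d_{\min}^2=6/(M-1)$.
\item $\mathrm{Var}(A_k)=f_i-|g_i|^2$, obtained via $\mathbb{E}|H|^4=2$ and the second moment of $\tilde x_l$ again.
\end{itemize}
Collecting these terms and multiplying by $N_r$ yields the stated variance.

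The main obstacle is the bookkeeping of the self term $A_k$. The symbol $\tilde x_l$ is random and correlated with the phase pattern of $a_i$. Also, $|H_{k,l}|^2$ is exponential rather than Gaussian, so the fourth moment enters. Matching the exact expression for $f_i$, including its cross term $4\Re[a_i|a_i|^2/2\cdots]$, requires care about which quantities are averaged before squaring. I would first compute $\mathbb{E}[A_k^2]$ directly with $\tilde x_l$ replaced by its mean, and then check which approximation reproduces $f_i$. A secondary caveat is that the CLT is only approximate for small $N_r$, which is why the statement says ``approximated''.
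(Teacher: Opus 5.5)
\textbf{Verdict: genuine gap in the self term.} Your mean and your noise and interference variances are fine. They agree with the paper's contributions $2N_r\Re(g_i)$, $2N_r\sigma_n^2|a_i|^2$ and $2N_r(N_t-1)|a_i|^2\,\mathbb{E}|\tilde x_m|^2$. The self term $A_k$ is where the plan fails, and it is more than bookkeeping.

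First, the totals do not close. The stated variance is $N_r\bigl[2(N_t-1)|a_i|^2\mathbb{E}|\tilde x_m|^2+2\sigma_n^2|a_i|^2+2(f_i-|g_i|^2)\bigr]$, so you would need $\mathrm{Var}(A_k)=2(f_i-|g_i|^2)$, not $f_i-|g_i|^2$.

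Second, $f_i-|g_i|^2$ is not the variance of any real quantity in the paper's argument. The paper keeps the complex sum $\sum_k W_{k,i}(c_k^*+W_{k,i}^*/2)$. There, $f_i-|g_i|^2$ is the \emph{complex} variance $\mathbb{E}|S_k|^2-|\mathbb{E}S_k|^2$ of the self summand $S_k=(a_i\tilde x_l^*+|a_i|^2/2)|H_{k,l}|^2$, with $f_i=\mathbb{E}|H_{k,l}|^4\,\mathbb{E}|a_i\tilde x_l^*+|a_i|^2/2|^2$. That is the appendix's $f_i$. The version printed in the theorem applies the factor $\mathbb{E}|H|^4=2$ only to the cross term, and with it $f_i-|g_i|^2<0$ for $M=16$, $|a_i|=2d_{\min}$. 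So do not try to reverse-engineer it.

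The paper computes complex means and variances of the three pieces and invokes a complex CLT to get $\mathcal{CN}(N_rg_i,V)$. It then declares that $2\Re(\cdot)$ turns this into $\mathcal{N}(2N_r\Re(g_i),2V)$, i.e.\ it treats the sum as circularly symmetric. This step produces both the overall factor $2$ and the form $f_i-|g_i|^2$, and it is what your plan is missing.

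The circularity step is exact for your $B_k$ and $C_k$, which is why your real-variance computations agree with the paper there. It is not exact for $A_k$, because $S_k-\mathbb{E}S_k$ contains $g_i(|H_{k,l}|^2-1)$, a complex constant times a real fluctuation. If you finish your own computation, using $\mathbb{E}|H|^4=2$ and $\mathrm{Var}(\tilde x_l)=\mathbb{E}|x_l|^2=1$, you get
\[
\mathrm{Var}(A_k)=8\,\mathbb{E}\bigl[(\Re(a_i\tilde x_l^*)+|a_i|^2/2)^2\bigr]-4(\Re g_i)^2=4(\Re g_i)^2+4|a_i|^2 .
\]
The paper's recipe instead gives $2(f_i-|g_i|^2)=2|g_i|^2+4|a_i|^2$. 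These differ by $2[(\Re g_i)^2-(\Im g_i)^2]$, which is nonzero in general; for $M=4$ it equals $-2$. So your real-variance route, done correctly, gives a different formula (the exact per-antenna variance), and no regrouping recovers the stated one. To prove the theorem as written, you must adopt the paper's complex-CLT-plus-circularity heuristic.

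Your fallback of replacing $\tilde x_l$ by its mean inside $\mathbb{E}[A_k^2]$ does not help either. It discards the second moment $\mathbb{E}|\tilde x_l|^2$ that $f_i$ contains.

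Finally, the $Z_k$ are i.i.d.\ only conditionally on $\mathbf{x}$, yet your moments average over $x_l$. Unconditionally, the $A_k$ share $\tilde x_l$ and are correlated across $k$. The paper makes the same simplification when it multiplies a per-antenna variance by $N_r$. You should state this as part of the approximation rather than derive it from independence.
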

\emph{Proof:} See Appendix~\ref{app:diagonal}.
\vspace{0.5em}

As is evident from Theorem~\ref{theorem:diagonal}, the distribution parameters depend on the parameters of the MIMO system, such as the modulation order $M$ and the number of transmit and receive antennas $N_t$ and $N_r$, respectively. For off-diagonal entries, we derive the following distributions.

\vspace{0.5em}
\begin{theorem}\label{theorem:off-diagonal}
The probability distribution of the off-diagonal entries, $Q_{i,j}$, of the QUBO matrix can be approximated by the following:
\begin{itemize}
    \item Case 1: $\lfloor \frac{i}{r} \rfloor \neq \lfloor \frac{j}{r} \rfloor$, 
    \begin{align}
    \label{eq:QUBO_PDF_case1}
          Q_{i,j}\sim &\ \mathcal{N}[0,\\  \nonumber
          & \,\,\,\, (2\cdot N_r\cdot|a_i|^2\cdot|a_j|^2)].
    \end{align}

    \item Case 2: $\lfloor \! \frac{i}{r} \rfloor  = \lfloor \frac{j}{r} \rfloor$ and $\lfloor \frac{(j \!\! \mod \! r)}{r/2} \rfloor \ne \lfloor\frac{(i \!\! \mod \! r)}{r/2} \rfloor$,
    \begin{equation}
    \label{eq:QUBO_PDF_case2}
        Q_{i,j} = 0.
    \end{equation}
    
    \item Case 3: $\lfloor \! \frac{i}{r} \rfloor  = \lfloor \frac{j}{r} \rfloor$ and $\lfloor \frac{(j \!\! \mod \! r)}{r/2} \rfloor = \lfloor\frac{(i \!\! \mod \! r)}{r/2} \rfloor$, 
    \begin{align}
    \label{eq:QUBO_PDF_case3}
        Q_{i,j}\sim &\  \mathcal{N}[(2\cdot N_r\cdot|a_i|\cdot|a_j|), \\ \nonumber
        & \,\,\,\, (4\cdot N_r\cdot|a_i|^2\cdot|a_j|^2)].
    \end{align}
\end{itemize}
\end{theorem}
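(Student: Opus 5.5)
The plan is to write each weight as $W_{k,m}=a_m H_{k,l(m)}$, where $l(m)$ is the transmit antenna addressed by bit $m$ and $a_m$ is the deterministic factor defined in Theorem~\ref{theorem:diagonal}. Here $a_m$ equals either a real number $\pm 2^{s}d_{\min}/2$ or that number multiplied by $\textbf{j}$, depending on whether bit $m$ encodes the in-phase or the quadrature component. Substituting this into \cref{eq:QUBO_matrix_entries} gives
\begin{equation*}
Q_{i,j}=2\,\Re\!\Big(a_i^* a_j\sum_{k=1}^{N_r} H_{k,l(i)}^* H_{k,l(j)}\Big),\qquad i\neq j .
\end{equation*}
Every off-diagonal entry is therefore a real-part projection of a sum of $N_r$ i.i.d.\ terms. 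I will take the Rayleigh channel to have $H_{k,l}\sim\mathcal{CN}(0,1)$, which is the normalization implicit in Theorem~\ref{theorem:diagonal}. The three cases then differ only in whether $l(i)=l(j)$ and in the phase relation between $a_i$ and $a_j$.

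\emph{Case 1} ($l(i)\neq l(j)$). The product $Z_k=H_{k,l(i)}^*H_{k,l(j)}$ is a product of independent circular Gaussians. It has zero mean, $\mathbb{E}|Z_k|^2=1$, and it is circularly symmetric. Consequently $\Re(e^{\textbf{j}\phi}Z_k)$ has mean $0$ and variance $1/2$ for any fixed phase $\phi$. The $N_r$ terms are i.i.d., so the central limit theorem gives $\sum_k\Re(a_i^*a_jZ_k)\approx\mathcal{N}(0,N_r|a_i|^2|a_j|^2/2)$. After multiplying by $2$, the variance becomes $2N_r|a_i|^2|a_j|^2$, which matches \eqref{eq:QUBO_PDF_case1}.

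\emph{Cases 2 and 3} ($l(i)=l(j)$). The sum collapses to $\sum_k|H_{k,l}|^2$, which is real and positive. Hence $Q_{i,j}=2\Re(a_i^*a_j)\sum_k|H_{k,l}|^2$.
\begin{itemize}
\item In Case 2 one bit is in-phase and the other is quadrature, so $a_i^*a_j$ is purely imaginary. Its real part vanishes, giving $Q_{i,j}=0$ exactly.
\item In Case 3 both bits lie on the same axis and carry the same sign, so $a_i^*a_j=|a_i||a_j|$. Here $\sum_k|H_{k,l}|^2$ is a sum of $N_r$ i.i.d.\ unit-mean exponentials with mean $N_r$ and variance $N_r$. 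The central limit theorem then gives mean $2N_r|a_i||a_j|$ and variance $4N_r|a_i|^2|a_j|^2$, as stated in \eqref{eq:QUBO_PDF_case3}.
\end{itemize}

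The main obstacle is bookkeeping rather than probability. The index conventions in \cref{eq:MIMO_to_QUBO_parameters_2} (the floor and mod expressions, with off-by-one issues at multiples of $r$) must be shown to line up with the case conditions $\lfloor i/r\rfloor$ and $\lfloor (i \bmod r)/(r/2)\rfloor$. One also has to confirm that the sign and phase of $a_i$ depend only on the real-versus-imaginary block, so that $a_i^*a_j$ is real and positive in Case 3. I would verify this first on a small example with $r=4$.

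A secondary point is that the Gaussian approximation rests on the CLT over $N_r$. This is why the theorem says ``approximated''; the exact laws are a Gamma distribution in Case 3 and a variance-Gamma distribution in Case 1. I would note this explicitly rather than try to bound the approximation error.
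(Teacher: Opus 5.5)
Your proposal is correct and follows the paper's appendix proof essentially step for step: you write $Q_{i,j}=2\,\Re\bigl(\sum_k W_{k,i}^*W_{k,j}\bigr)$ and use independence of distinct columns of $\mathbf{H}$ plus the CLT in Case 1, and in Cases 2 and 3 the sum collapses to $\sum_k|H_{k,l}|^2$ times a prefactor that is purely imaginary or real positive, respectively. Your circular-symmetry argument for the factor $1/2$ in the real-part variance, and your reading of the second argument of $\mathcal{N}[\cdot,\cdot]$ as a variance, are what make the stated constants come out; the appendix leaves both implicit.
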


\textit{Proof:} See Appendix~\ref{app:off-diagonal}. 
\vspace{0.5em}

In Theorem~\ref{theorem:off-diagonal}, case 1 pertains to the quadratic coefficients of the binary variables assigned to different transmit antennas, while cases 2 and 3 pertain to the quadratic coefficients of the binary variables for the same transmit antenna. The difference between case 2 and case 3 is that, in case 2, the variable $i$ is assigned to the in-phase component of the transmitted symbol and the variable $j$ is assigned to the quadrature component, while in case 3, both variables are assigned to either in-phase or quadrature components. 

To support the validity of Theorems~\ref{theorem:diagonal} and \ref{theorem:off-diagonal}, in Fig.~\ref{fig:QUBO_PDFs} we provide distributions obtained from numerical simulations for different values of $N_t$ and $N_r$. In this figure, we choose a representative entry of the QUBO matrix from the diagonal and different cases of non-diagonal entries. We choose $Q_{1,1}$, $Q_{1,r+1}$, $Q_{1,\frac{r}{2}+1}$, and $Q_{1,2}$ as representatives of the entries belonging respectively to the diagonal, the off-diagonal case 1, the off-diagonal case 2, and the off-diagonal case 3 distributions. For each plot, we use 1000 channel realizations of the system, assuming a Rayleigh fading model for $\mathbf{H}$ with signal-to-noise ratio $E_{\rm b}/N_0 = 20$ dB. As we can see, the theoretical PDFs closely match the empirical ones in all cases. Note that off-diagonal case 2 entries are not included in the figure: all generated samples are exactly zero, as is expected from \cref{eq:QUBO_PDF_case2}.

We note that these analytical results depend on the distribution of $\mathbf{H}$ and that assuming a different channel model would modify the distribution of the QUBO coefficients.

\begin{figure}[t]
    \centering
    \begin{subfigure}{0.24\textwidth}
        \centering
        \includegraphics[width=\linewidth]{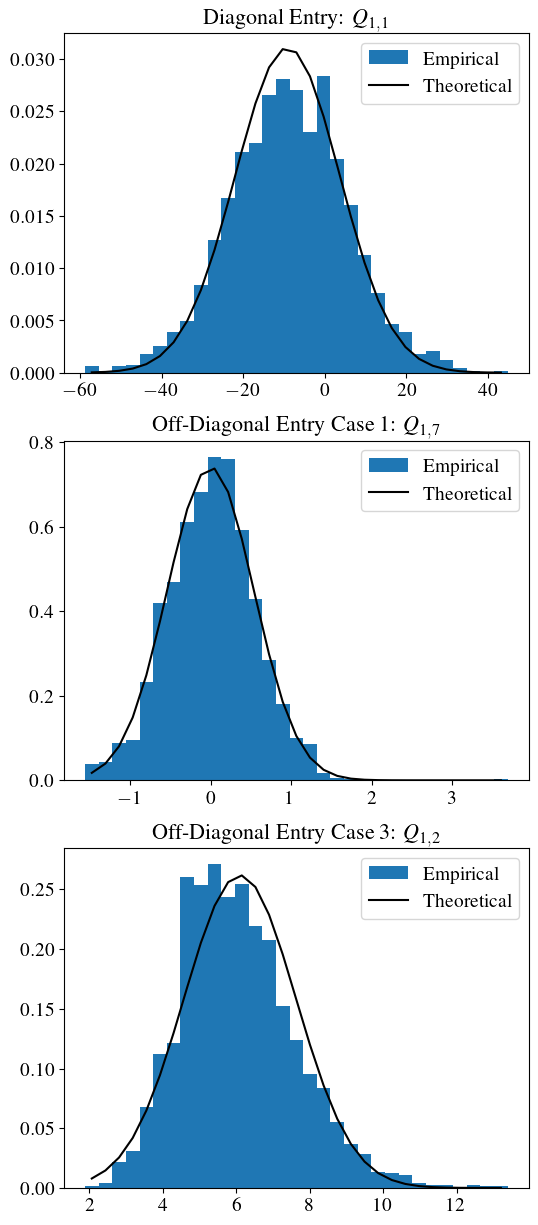}
        \caption{$N_t=N_r=16,\ M=64$}
        \label{NtNr16M64_PDF}
    \end{subfigure}
    \begin{subfigure}{0.24\textwidth}
        \centering
        \includegraphics[width=\linewidth]{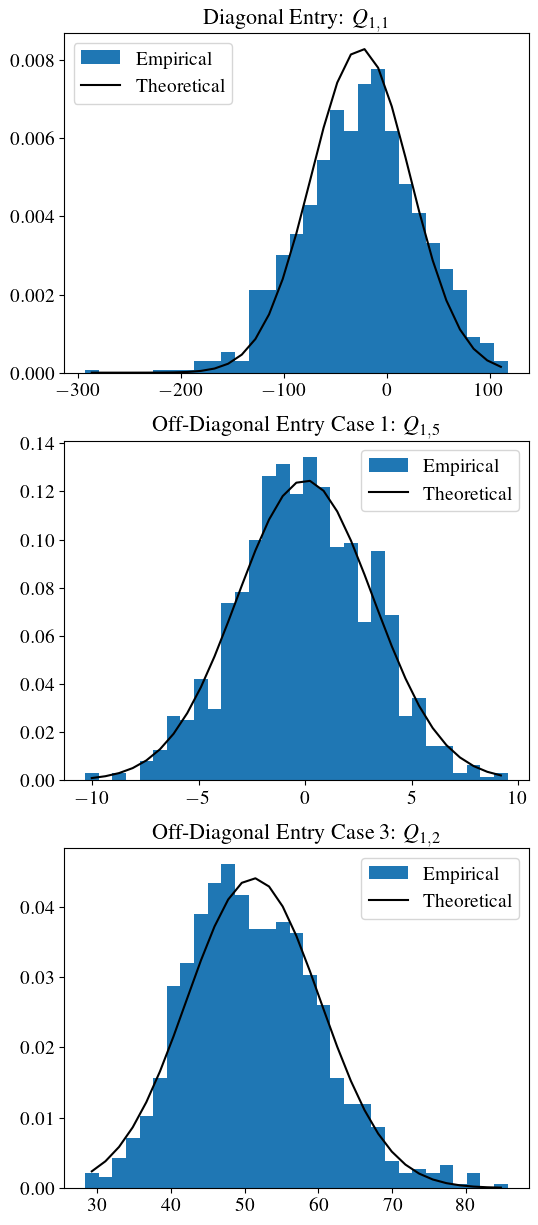}
        \caption{$N_r=N_t=32,\ M=16$}
        \label{NtNr32M16_PDF}
    \end{subfigure}
    \caption{Probability distribution functions (PDF) of QUBO matrix entries for systems with  the parameters \mbox{$N_t=N_r=16$,} $M=64$ (left) and $N_t=N_r=32$, $M =16$ (right). The representative matrix entries have indices $(i,j)$ equal to $(1,1)$ for the diagonal entries, $(1, r+1)$ for the off-diagonal case 1 entries, and $(1,2)$ for the off-diagonal case 3 entries. Theoretical distributions are reported in the PDFs \eqref{eq:QUBO_PDF_diag}, \eqref{eq:QUBO_PDF_case1}, and \eqref{eq:QUBO_PDF_case3}, for diagonal, case 1, and case 3 entries, respectively. Empirical distributions are created with 1000 realizations of the QUBO matrix $\mathbf{Q}$ for each value of $N_t$, $N_r$, and $M$. We also generate entries with indices $(1, \frac{r}{2}+1)$ for the off-diagonal case 2: as expected from \cref{eq:QUBO_PDF_case2}, they are all equal to 0, so they are not included in the figure.}
    \label{fig:QUBO_PDFs}
\end{figure}

\section{Mapping the QUBO Matrix to Hardware Constraints}
\label{sec:QUBO_Preproc}

Once \cref{eq:MIMO_obj} has been formulated as a QUBO problem, some preprocessing steps are performed on the matrix before attempting to find a solution with the QUBO solver. These preprocessing steps include normalization and quantization. Normalization is conducted so that the matrix entries fit the range of coefficients acceptable by the hardware (typically $[-1,1]$). Quantization is performed due to the limited precision of the hardware in storing the QUBO coefficients. 

\subsection{Quantization}
We consider a uniform quantization where the range of the input signal is divided into equally spaced intervals. The number of bits of precision $n_{\rm b}$ that we assign to the hardware determines the number of intervals $2^{n_{\rm b}}$. The \emph{range} and \emph{${n_{\rm b}}$} are the parameters of our quantization process.

Depending on the hardware's capabilities, two strategies can be considered for quantization. One strategy, which we call ``homogeneous'', is to quantize all entries in the QUBO matrix using the same precision. The second strategy, which we call ``heterogeneous'', is to use two different precision levels for the diagonal and off-diagonal entries of the matrix, specifically, to keep the diagonal terms at high precision and quantize only the off-diagonal terms. The justification for a heterogeneous approach is the existence of hardware~\cite{Mahmoodi2019, Cai2020, Hizzani2024, Bhattacharya2024} where the diagonal terms, which correspond to the linear terms of the problem, can be embedded with high precision, while the available precision for the quadratic terms is limited. 

\subsection{Preprocessing Parameters}
Certain parameters must be determined in order to perform the preprocessing steps. For normalization, a \emph{scaling factor} needs to be defined. For quantization, the number of precision bits \emph{$n_{\rm b}$} and the quantization \emph{range} must be set. While $n_{\rm b}$ depends on the needs of the user and the capabilities of the hardware, the other two parameters depend on the problem, and we use two approaches to set them: a \emph{per-realization} approach, in which the parameters are determined based on each realization of the channel matrix $\mathbf{H}$, and consequently the QUBO matrix $\mathbf{Q}$; and a \emph{statistical} approach, in which the parameters are set based on the features of the probability distributions derived in Section~\ref{sec:qubo_distr}.   
The statistical approach is faster compared to the per-realization approach, since the preprocessing parameters can be calculated a single time offline. This makes the statistical approach more resource efficient, as it requires fewer hardware components compared to the per-realization approach. However, this could come at the cost of degraded performance compared to the per-realization approach as it is less responsive to changes in the channel condition. For both approaches, we discuss how normalization is conducted and propose several quantization schemes. 

\subsection{Per-Realization Mapping}
\subsubsection{Normalization}
\label{sec:norm_per_realiz}
Normalization is performed by dividing all entries by the maximum absolute value of the QUBO matrix $\mathbf{Q}$. Therefore, for each realization, the scaling factor is the reciprocal of the maximum absolute value of the matrix. This ensures that all normalized entries are within the range $[-1,1]$.
\subsubsection{Quantization Schemes}
\label{sec:quant_per_realiz}
We define three quantization schemes to be applied after normalization.
\begin{itemize}
    \item \textbf{Homogeneous}:
    All entries in the matrix are quantized using $n_{\rm b}$ bits of  precision within the range $[-1,1]$.
    \item \textbf{Large Off-Diagonal}:
    The diagonal entries are retained at the original precision, while all off-diagonal coefficients are uniformly quantized over the full dynamic range of the off-diagonal matrix entries after normalization. The quantization will be heterogeneous.    
    \item \textbf{Small Off-Diagonal}:
    This is similar to the large off-diagonal scheme except that, for the off-diagonal entries, rather than assigning the quantization levels based on their full dynamic range, we choose the quantization levels based on the dynamic range of entries in cases 1 and 2 of Theorem~\ref{theorem:off-diagonal}. We call this the small range. Case 3 matrix entries that fall outside the small range are clipped, that is, their value will be set equal to the minimum or the maximum of the small range and then quantized. The motivation for using the small range rather than the full dynamic range is to provide more precision for the majority of the off-diagonal terms.    
\end{itemize}

\subsection{Statistical Mapping}
\subsubsection{Normalization}
\label{sec:norm_stat}
Here, the scaling factor is determined statistically. That is, instead of using the maximum values of the entries of $\mathbf{Q}$, we use \emph{statistical} maximum values derived from their distribution. Specifically, the statistical maximum absolute value for each coefficient is calculated as the sum of its mean and five times its standard deviation. Then, the largest value across the statistical maximum values of all coefficients is chosen for the scaling factor. It is possible that after applying the statistical scaling factor to the QUBO matrix, some values fall outside the range $[-1,1]$. Therefore, in the end, a clipping is required, where values smaller than $-1$ or larger than $1$ are projected to $-1$ and $1$, respectively.

\subsubsection{Quantization Schemes}
\label{sec:quant_stat}
This is the statistical counterpart to the three per-realization quantization schemes described above in \cref{sec:quant_per_realiz}.
\begin{itemize}
    \item \textbf{Homogeneous}:
    Uniform quantization is performed on the statistically normalized entries of $\mathbf{Q}$.
    \item \textbf{Large Off-Diagonal}:
    The difference between this scheme and its per-realization counterpart is that the dynamic range is calculated statistically. We consider the distributions of the entries of case 1 and case 3 (the entries of case 2 are 0) in PDFs \eqref{eq:QUBO_PDF_case1} to \eqref{eq:QUBO_PDF_case3}, and find the smallest statistical minimum and the largest statistical maximum in these entries to determine the dynamic range. Any off-diagonal entry outside the dynamic range will be clipped.
    \item \textbf{Small Off-Diagonal}:
    This scheme is similar to its per-realization counterpart, except that the dynamic range is determined by the statistical minimum and maximum over the entries in case 1 and case 2.
\end{itemize}

\section{Effect of Mapping on the QUBO Problem's Optimal Solution}
\label{sec:preprocess_effect}
The preprocessing steps we have discussed in \cref{sec:QUBO_Preproc} induce quantization and clipping errors on the entries of the QUBO matrix. These errors can accumulate in such a way that the optimal solutions of the quantized and non-quantized QUBO problems are not the same. As a result, even an ideal QUBO solver is unable to find the configuration that corresponds to the optimal solution of the non-quantized problem, and a detection error occurs. Thus, it is important to identify the criteria, for example, the required precision, under which the optimal solution is preserved after preprocessing. Such criteria are crucial in deciding on the applicability of a QUBO solver to the MIMO detection problem formulated as a quantized QUBO problem. 

Let us denote the QUBO matrix after normalization by $\mathbf{Q}$ and the QUBO matrix after normalization and quantization by $\mathbf{\widehat{Q}}$.
The optimal solution to the QUBO problem identified by $\mathbf{Q}$ is identical to the optimal solution of the original problem, since optimality is preserved when scaling by a fixed value. Now, given a configuration $\mathbf{q}^{(i)}\in \{0, 1\}^{rN_t}$ for the binary variables, the value of its objective function, or energy, is
\begin{equation}
\label{energy_eq}
E(\mathbf{q}^{(i)})= (\mathbf{q}^{(i)})^{T}\mathbf{Q}\, \mathbf{q}^{(i)},
\end{equation}
where $(\cdot)^T$ denotes the transpose. Suppose $\mathbf{q}^{\text{opt}}$ is the optimal configuration (i.e., the configuration with the lowest energy) for the QUBO problem defined by $\mathbf{Q}$.  Our goal is to find conditions where $\mathbf{q}^{\text{opt}}$ is also the optimal solution to the QUBO problem defined by $\mathbf{\widehat{Q}}$ to ensure that clipping and quantization do not introduce errors. 

Let us define the energy difference $\epsilon\ge0$ between the optimal configuration  $\mathbf{q}^{\text{opt}}$ and the second-best configuration (i.e., the configuration with the second-lowest energy) $\mathbf{q}^{\dagger}$ as
\begin{equation}
\label{gap_eq}
\epsilon=E(\mathbf{q}^{\dagger})-E(\mathbf{q}^{\text{opt}}).
\end{equation}

These two binary configurations are associated with two transmit vectors $\mathbf{x^{\text{opt}}}$ and $\mathbf{x^{\dagger}}$ in the $\mathcal{S}^{N_t}$ space. For sufficiently large signal-to-noise ratios, it is safe to assume that the second-best solution of the expression~\eqref{eq:MIMO_obj}, $\mathbf{x^\dagger}$, differs from the optimal solution $\mathbf{x}^{\text{opt}}$ only in the $j^*$-th entry, where \mbox{$j^* = \arg\min_{j} \Vert \mathbf{H}_j \Vert ^2$} and $\mathbf{H}_j$ is the set of entries of the $j$-th column of $\mathbf{H}$. In addition, the differing entries are two neighbouring points of the constellation, that is, $||\mathbf{x^{\dagger}} - \mathbf{x^{\rm opt}}|| = d_{\min}$. As a result, \cref{gap_eq} can be rewritten as
\begin{equation}
\begin{aligned}
    \label{eq:epsilon}
\epsilon &=E(\mathbf{q}^{\dagger})-E(\mathbf{q}^{\text{opt}})=\alpha\left(\| \mathbf{y} - \mathbf{Hx^{\dagger}}\| ^2 - \| \mathbf{y} - \mathbf{Hx^{\text{opt}}}\| ^2\right) \\
& = \alpha\left(\| \mathbf{H}(\mathbf{x^{\text{opt}}}-\mathbf{x^{\dagger}})+\mathbf{n}\|^2- \| \mathbf{n}\|^2\right)  \\
& = \alpha \left(d_{\min}^2 \Vert \mathbf{H}_{j^*} \Vert ^2 + 2 d_{\min} \Re(\mathbf{H}_{j^*} \mathbf{n}^H) \right),
\end{aligned}
\end{equation}
where $\alpha$ is the inverse of the scaling factor used to normalize the QUBO problem.

Now, let us define the maximum energy error induced by quantization for a given channel realization and $\mathbf{Q}$ as 
\begin{equation}
\label{eq:delta}
\delta = \max_{i\in \{1,2,...,2^{rN_t}\}} |E(\mathbf{q}^{(i)})-\widehat{E}(\mathbf{q}^{(i)})|,
\end{equation}
where $\widehat{E}(\mathbf{q}^{(i)})= (\mathbf{q}^{(i)})^{T}{\mathbf{\widehat{Q}}}\, \mathbf{q}^{(i)}$. 
It is guaranteed that the optimal solution will be preserved after quantizing $\mathbf{Q}$ if 
\begin{equation}
\label{eq:delta_min_epsilon}
\delta <\frac{\epsilon}{2}.
\end{equation}
To derive inequality~\eqref{eq:delta_min_epsilon}, the worst-case scenario is considered, in which the energy of the optimal solution increases by $\delta$ after quantization, $\widehat{E}(\mathbf{q}^{\text{opt}}) - E(\mathbf{q}^{\text{opt}}) = \delta$, while the energy of the second-best solution decreases by $\delta$,  $E(\mathbf{q}^{\dagger}) - \widehat{E}(\mathbf{q}^{\dagger}) = \delta$. In this case, $\widehat{E}(\mathbf{q}^{\text{opt}})$ is still the global minimum of $\mathbf{\widehat{Q}}$ only if the condition in inequality~\eqref{eq:delta_min_epsilon} is satisfied.  
Note that the condition in inequality~\eqref{eq:delta_min_epsilon} is sufficient but not necessary. For example, if quantization affects the solutions so that, for all $i$, $E(\mathbf{q}^{(i)})-\widehat{E}(\mathbf{q}^{(i)}) = \delta$, the quantized problem would still have the same optimal solution as the original problem for any value of $\delta$.

Furthermore, enforcing the condition in inequality~\eqref{eq:delta_min_epsilon} for every channel realization matrix $\mathbf{H}$ and the QUBO matrix $\mathbf{Q}$ is too conservative and  requires a very high degree of precision, most likely higher than the precision typically available in QUBO solvers. In addition, obtaining an analytical form to enforce such a constraint across all matrices $\mathbf{Q}$ is very complex and outside the scope of this work. To this end, we first define  $\delta_{\rm up}$ as 
\begin{equation}
\begin{aligned}
\label{eq:upper_delta}
\delta_{\rm up} = &  \sum_{i=1}^{rN_t}\sum_{j= i}^{rN_t}|Q_{i,j}-\widehat{Q}_{i,j}| \\
 & \geq \max_{i\in \{1,2,...,2^{rN_t}\}} |E(\mathbf{q}^{(i)})-\widehat{E}(\mathbf{q}^{(i)})|,
\end{aligned}
\end{equation}
where the right-hand side of the inequality is equal to $\delta$, as in \cref{eq:delta}.

Now, we relax the condition in inequality~\eqref{eq:delta_min_epsilon} and instead use 
\begin{equation}
    \label{eq:cond_general}
    \delta_{\max}<\frac{\epsilon_{\min}}{2}.
\end{equation}

To find $\epsilon_{\min}$, we first apply the central limit theorem to \cref{eq:epsilon}, to find the distribution of $\epsilon$, assuming that $N_r$ is sufficiently large. We derive the following expression by noticing that both the expected value and the variance of $|H_{i,j}|^2$ are $1$, while $\Re(H_{i,j} n_i)$ has an expected value of $0$ and a variance of $\sigma_n^2/2$: 
\begin{equation}
\begin{aligned}
    \label{eq:epsilon_PDF}
\epsilon& \sim \mathcal{N}[(\alpha\cdot d_{\min}^2\cdot N_r),\ (\alpha^2\cdot d_{\min}^4\cdot N_r)] \\
&\ \ \ + \mathcal{N}[0,\ (2\cdot \alpha^2 \cdot d_{\min}^2 \cdot N_r \cdot \sigma_{n}^2)] \\
& \sim\mathcal{N}[(\alpha \cdot d_{\min}^2\cdot N_r),\ ((d_{\min}^4+2\cdot d_{\min}^2\cdot \sigma_{n}^2)\cdot \alpha^2 \cdot N_r)].
\end{aligned}
\end{equation}
Then, we define $\epsilon_{\min}$ as the $p$-th percentile of the above normal distribution.  For $\delta_{\max}$, we take a numerical approach by first calculating $\delta_{\rm up}$ from \cref{eq:upper_delta} on a large number of channel realizations and then setting $\delta_{\max}$ to the $p$-th percentile of the collected samples. 

In what follows, we compare the analytical results with the numerical simulations to evaluate the performance of the proposed quantization schemes and validate our analysis. \Cref{fig:delta_max_th} presents the relationship between $\delta_{\max}$ and the number of bits used in quantization for different quantization schemes, and shows the precision required to meet the condition in inequality~\eqref{eq:cond_general}. To calculate $\delta_{\max}$ and $\epsilon_{\min}$ as explained above, we set $p=99$ and use 5000 realizations of MIMO systems with a signal-to-noise ratio of $E_{\rm b}/N_0 = 10$~dB. We observe that the number of precision bits required for $\delta_{\max}$ to be below the $\epsilon_{\min}/2$ threshold depends on the mapping method and increases with the size of the MIMO system and modulation. Large off-diagonal methods require lower precision than homogeneous methods to have their $\delta_{\max}$ values below the threshold. For both of these schemes, the per-realization and statistical methods perform very similarly. Large and small off-diagonal per-realization methods perform very similarly in low modulation, while small off-diagonal statistical  methods are always above the threshold. At higher modulations, both small off-diagonal methods are always above the threshold.

\begin{figure}[t]
    \centering
    \includegraphics[width=\linewidth]{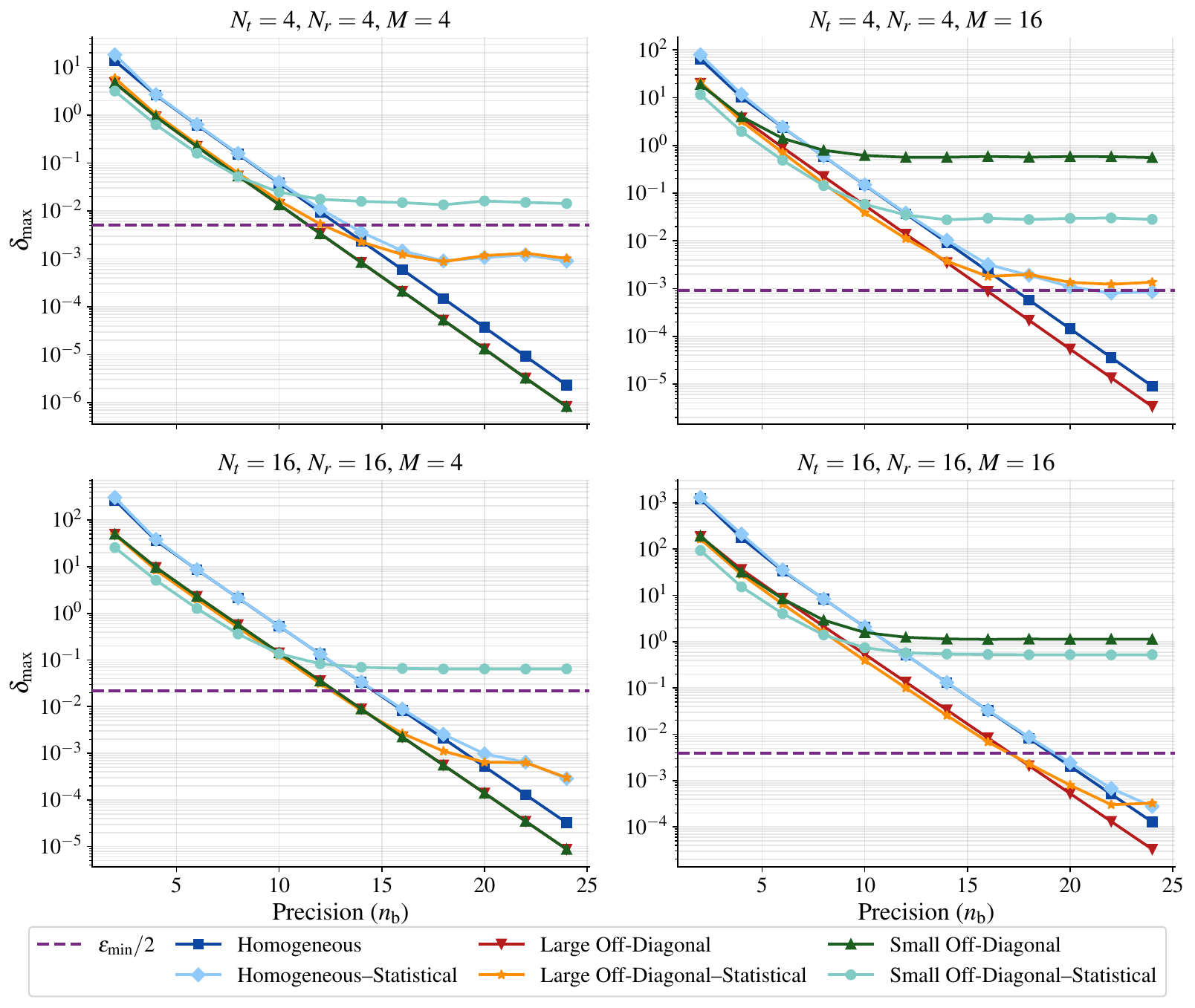}
    \caption{Maximum quantization error limit $\delta_{\max}$ versus precision for different MIMO systems with $E_{\rm b}/N_0=10$~dB.}
    \label{fig:delta_max_th}
\end{figure}

The required number of precision bits stated above is expected to be very conservative because the condition in inequality~\eqref{eq:cond_general} is not necessary, but sufficient, to preserve the optimal solution. Nevertheless, we believe that comparing the relative performance of different methods can be considered valid. In general, the large off-diagonal mapping schemes seem to be the best choices, as their values of  $\delta_{\max}$ are below the $\epsilon_{\min}/2$ threshold at a lower precision than the other methods. We confirm this claim with extended experiments in the section that follows.

\section{Experimental Results}\label{sec:experiments}
\renewcommand{\arraystretch}{0.85}
\begin{table*}[t]
    \centering
    \begin{tabular}{||c|c|c|c|c||}
        \hline
        $\bm{N_t\times N_r}$ & $\bm{M}$ & $\bm{n_{\rm b}}$ & $\bm{E_{\rm b}/N_0}$ &\textbf{Realizations} \\
        \hline\hline
        \multirow{4}{*}{$4\times 4$} & $4$ & \multirow{4}{*}{$2, 4, 6, 8, 10, 12, 14, 16$} & \multirow{4}{*}{$5, 10, 15$} &\multirow{2}{*}{10,000} \\
                    & $16$ & & &  \\
                    & $64$ & & & \multirow{2}{*}{1000} \\
                    & $256$ & & & \\
        \hline
        \multirow{4}{*}{$16\times 16$} & $4$ & \multirow{2}{*}{$2, 4, 6, 8, 10, 12, 14, 16$} & \multirow{2}{*}{$5, 10, 15$} &\multirow{2}{*}{10,000} \\
                    & $16$ & & &  \\
                    & $64$ & \multirow{2}{*}{$2, 4, 6, 8, 10, 12, 14, 16, 18, 20, 22, 24$} & \multirow{2}{*}{$5, 10, 15, 20, 25, 30$} & \multirow{2}{*}{1000} \\
                    & $256$ & & & \\
        \hline
        \multirow{4}{*}{$32\times 32$} & $4$ & \multirow{2}{*}{$2, 4, 6, 8, 10, 12, 14, 16$} & \multirow{2}{*}{$5, 10, 15$} &\multirow{2}{*}{10,000} \\
                    & $16$ & & &  \\
                    & $64$ & \multirow{2}{*}{$2, 4, 6, 8, 10, 12, 14, 16, 18, 20, 22, 24$} &
                    \multirow{2}{*}{$5, 10, 15, 20, 25, 30$} & \multirow{2}{*}{1000} \\
                    & $256$ & & & \\
    \hline
    \end{tabular}
    \caption{Configurations of all MIMO detection experiments. We analyzed three systems of different size, each with the same four modulations. For each system, we generated 10,000 realizations for the modulations $M=4$ and $16$, and 1000 realizations for $M=64$ and $256$. We explored a range of signal-to-noise ratios and numbers of bits of precision: from 5 to 15~dB and 2 to 16 bits for all modulations of the smallest system and for $M=4$ and $16$ of the larger two systems; and from 5 to 30~dB and 2 to 24 bits for $M=64$ and $256$ of the larger two systems. 
    }
    \label{tab:exp_configurations}
\end{table*}

\begin{figure*}[t]
    \centering
    \begin{subfigure}{0.48\linewidth}
        \includegraphics[width=\linewidth]{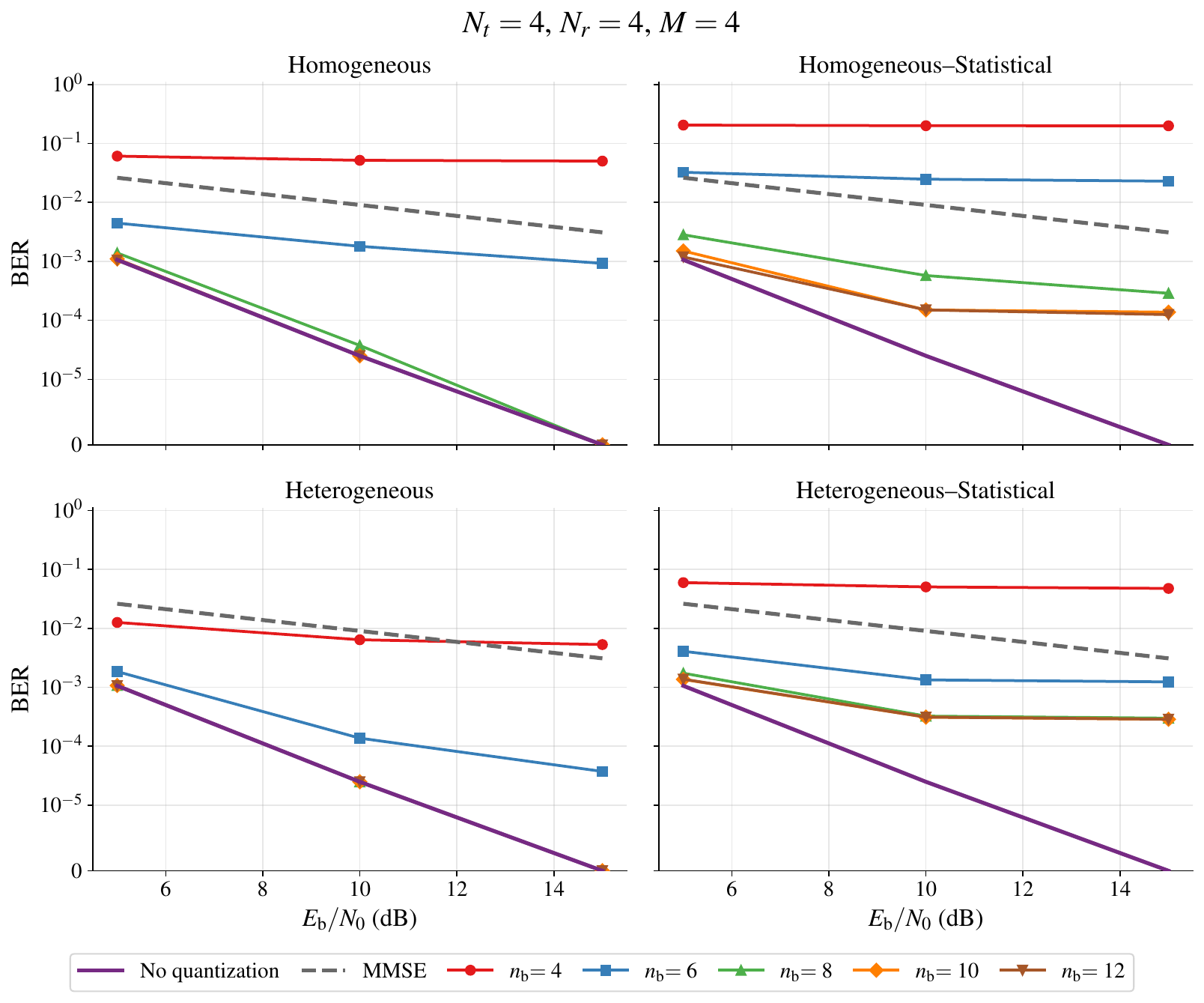}
    \end{subfigure}
    \begin{subfigure}{0.48\linewidth}
        \includegraphics[width=\linewidth]{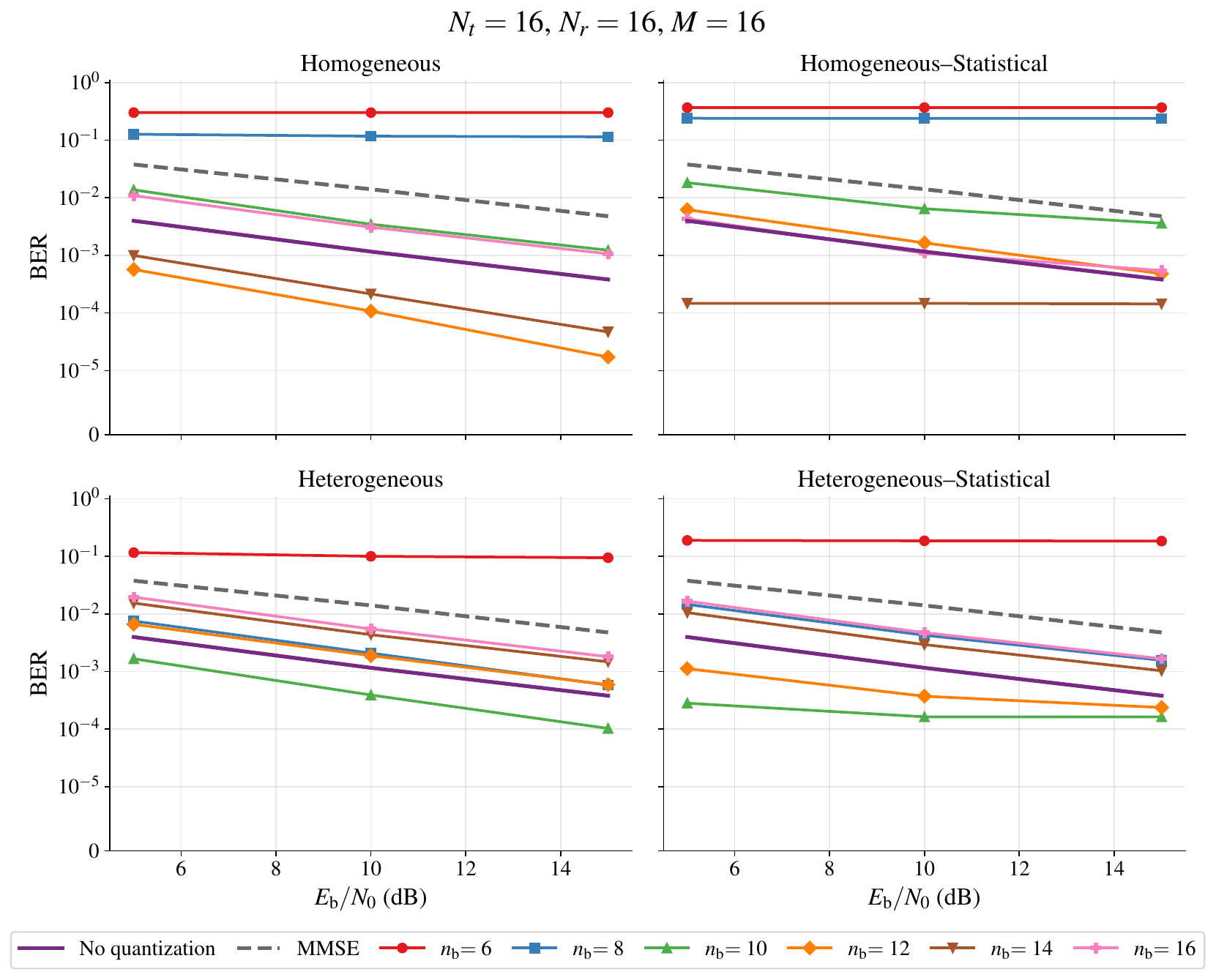}
    \end{subfigure}
    \caption{Bit error rate for MIMO detection performed after applying preprocessing with different quantization schemes and precision.}
    \label{fig:BER_bench}
\end{figure*}

In this section, we discuss our systematic exploration of the impact of preprocessing and the number of precision bits on the detection accuracy in several MIMO systems. A summary of all the experiment configurations is presented in Table~\ref{tab:exp_configurations}.

Each QUBO problem instance was solved using a parallel tempering solver (\texttt{PySA}~\cite{pysa}) using the result configuration from the MMSE method to initialize the algorithm. The temperature schedule used hot and cold temperatures computed according to the statistical properties of the QUBO terms~\cite{pysa}. To ensure a fair comparison across the system sizes and modulation orders, we fixed the total computational budget to 200,000 Monte Carlo steps each with an individual spin-flip, distributed proportionally according to the size of the QUBO problem $N=r\times N_t$.

The QUBO solver was configured with fixed parameters: \texttt{num\_replicas} $= 5$, \texttt{num\_reads} $=20$ (this parameter corresponds to the number of algorithm restarts), and \texttt{num\_sweeps} $=\,\, $200,000$/N$. The parameters were not tuned in advance, since the goal of this study has been to quantify the degradation of solvers' performance due to quantization and to define the best quantization scheme with the closest results to the full-precision implementation, and not to define the absolute best performance the solver can achieve or to compare it with other solvers. All experiments utilized single-precision floating-point arithmetic (\texttt{float32}) for computational efficiency, as preliminary tests showed negligible accuracy differences compared to double precision while providing a speed-up of approximately $20\%$. The binary solution returned by the QUBO solver was mapped back to the symbols of the constellation and compared to the original bit-string generated for each realization \cite{Kim2019}. The bit error rate (BER) was measured as the ratio of the number of erroneous bits divided by the total number of transmitted bits.  

\Cref{fig:BER_bench} shows BER results for a subsample of the experimental settings in Table~\ref{tab:exp_configurations} for brevity and clarity. A summary of all of our results is presented in \cref{fig:n_bits_all}. The small off-diagonal schemes consistently degraded the solver's performance regardless of precision and are omitted: from this point onward, we refer to the large off-diagonal scheme simply as ``heterogeneous''.

Four results stand out in all systems. One such result is that the full-precision baseline BER always outperforms MMSE, confirming the competitiveness of the QUBO formulation for MIMO detection~\cite{Kim2019, Kim2021, Singh2022, Zeng2025}, with improvements achievable at as few as 6 bits for the smallest systems. Another result is that both homogeneous and heterogeneous schemes approach the baseline BER when sufficient precision is used, with heterogeneous schemes performing better for larger systems, consistent with \cref{fig:delta_max_th} and validating the analysis in \cref{sec:preprocess_effect}. Third, the precision required increases alongside modulation order $M$, as expected due to the resulting increase in the size of the QUBO matrix and the decrease in the value of $d_{\rm min}$. Fourth, quantization occasionally improves the solver's performance relative to full precision---a counterintuitive but known effect for heuristic solvers, where quantization noise can help in escaping local minima~\cite{Seok2022}, and where the induced coefficient rounding can incidentally bring the temperature schedule closer to its optimal values. We confirmed this effect by testing several temperature schedules, and noticed that some schedules improved the solver's performance in solving the full-precision problem. Determining the optimal temperature schedule for each MIMO realization is highly impractical, as it would substantially decrease the throughput of the detector. 

The summary in \cref{fig:n_bits_all} confirms that heterogeneous methods achieve a better BER at lower precisions than homogeneous methods, consistent with the analysis in \cref{sec:preprocess_effect}. However, one discrepancy arises: while \cref{fig:delta_max_th} predicts a minimal difference between per-realization and statistical methods, experiments show that sometimes the per-realization methods perform better. This is because \cref{eq:delta} accounts only for the quantization error, while the experiments also introduce a clipping error at the statistical normalization stage (see \cref{sec:norm_stat}) that was not captured in the theoretical analysis.

Finally, we measured the effect of quantization independently of the solver. We used an exhaustive solver to evaluate the optimal configurations of the original problem and quantized problems using 1000 realizations of MIMO systems with a signal-to-noise ratio $E_{\rm b}/N_0 = 10$~dB. These results are presented only for MIMO systems of size $4 \times 4$, as running an exhaustive search for larger systems is not practical. \Cref{fig:ber_exhaustive} shows how, for the system where $M=4$, we can achieve a performance improvement over that of MMSE and match the full-precision BER results with as few as 6 and 8~bits of precision, respectively. At higher modulations, at least 8 and 14 bits of precision are required. These results are consistent with the \texttt{PySA} solver's results in \cref{fig:n_bits_all} and with the discussion in Ref.~\cite{Zhu2026}.

\section{Discussion and Conclusions}
The choice of the best quantization scheme depends on the flexibility and speed of the hardware on which the MIMO detection was implemented. According to our experimental results, the scheme requiring the fewest precision bits to achieve the lowest BER is the heterogeneous scheme. However, in adopting this scheme, some considerations should be taken into account. First, this is a per-realization scheme, which means that the normalization factor and the numerical values defining the quantization range must be obtained for each channel realization. This would require an extra calculation step and could potentially reduce the throughput of the system. The choice between a per-realization or statistical approach in preprocessing must be evaluated in the context of being a trade-off between the computational time and the number of precision bits required to achieve the lowest BER. Second, this scheme assumes that the linear (diagonal) terms of the QUBO problem can be mapped onto the hardware with a high level of precision, which is not always possible. We highlight that it is not necessary to match the full-precision performance in order to outperform the MMSE solution, which can be achieved with as few as 6 bits of precision for the smallest MIMO systems. 

Finally, we recommend the best scheme to use for a few specific examples. In Hopfield optimization solvers based on memristors~\cite{Hizzani2024}, a memristor crossbar array is used to calculate the gradient of the objective function to be minimized~\cite{Bhattacharya2024}. Using this type of architecture, also described in Refs.~\cite{Mahmoodi2019} and \cite{Cai2020}, the linear coefficients do not need to be stored in the crossbar array and can instead be applied by a peripheral circuit, so they can have greater precision than quadratic coefficients. In this context, employing a heterogeneous quantization scheme is feasible, and we recommended doing so.

In a CPU/GPU environment as described in Refs.~\cite{Kim2021} and~\cite{Singh2025}, there is complete flexibility in setting the precision of the QUBO coefficients. Even if precision is not an issue with this type of hardware, decreasing it will free up resources and speed up computing, improving the throughput of MIMO detection. In this context, we also recommend using a heterogeneous quantization scheme.

Not all CIMs are designed to solve Ising problems that contain linear terms. This issue can be addressed by adding an auxiliary variable and solving a QUBO problem that includes only quadratic terms~\cite{Singh2022}. In this context, as there will be more quadratic terms than in the original problem, and their distributions will be different from the ones presented in this paper; thus, it is safer to use a homogeneous quantization scheme.

\begin{figure*}[t]
    \centering
    \includegraphics[width=\linewidth]{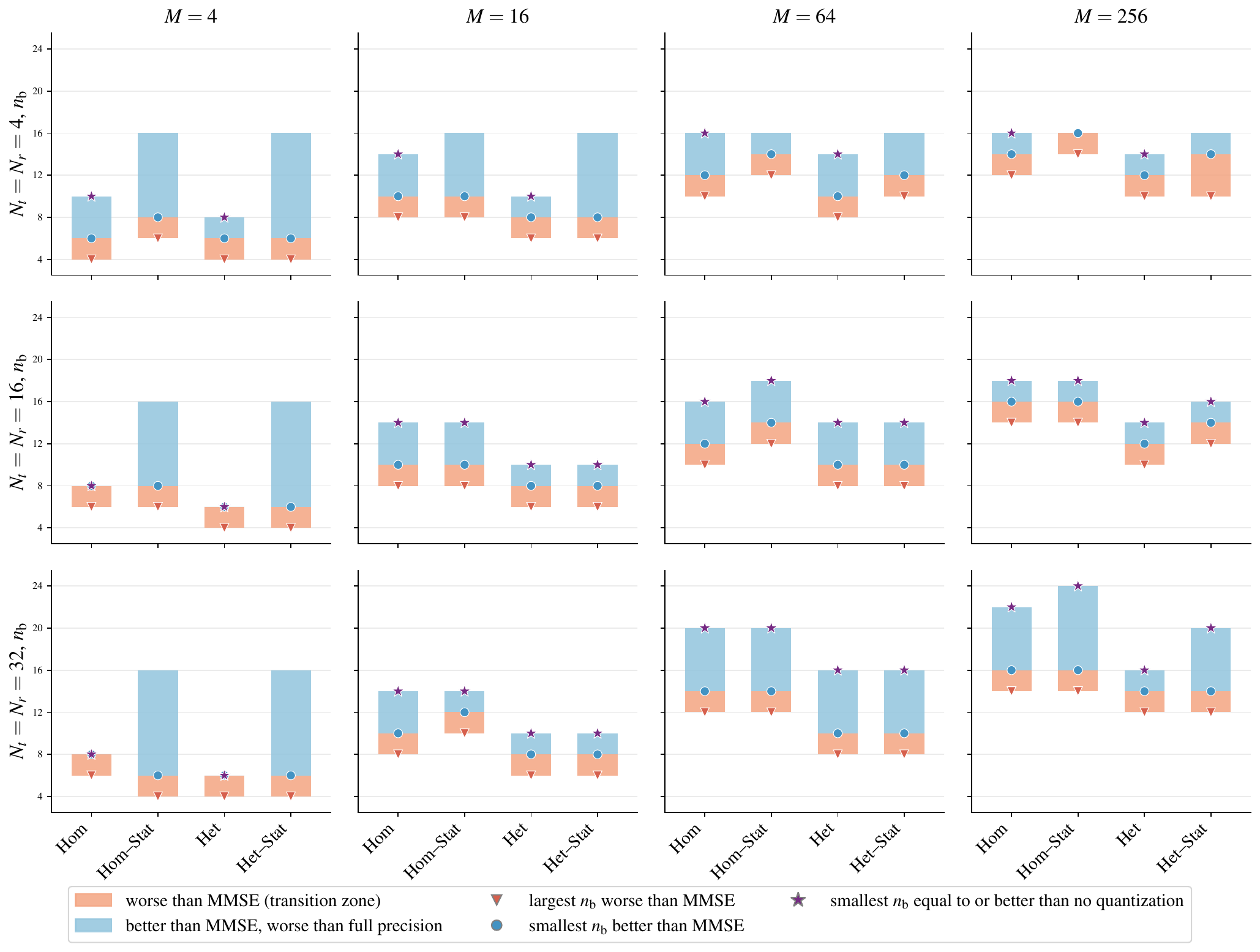}
    \caption{Summary of quantization schemes' performance to highlight important values of $n_{\rm b}$, specifically, the largest value associated with a bit error \mbox{rate (BER)} that is worse than MMSE (orange triangle); the smallest value associated with a BER better than MMSE (blue circle); and the smallest value associated with a BER equal to, or better than, full precision (purple star).} 
    \label{fig:n_bits_all}
\end{figure*}

\begin{figure}[h]
    \centering
    \includegraphics[width=\linewidth]{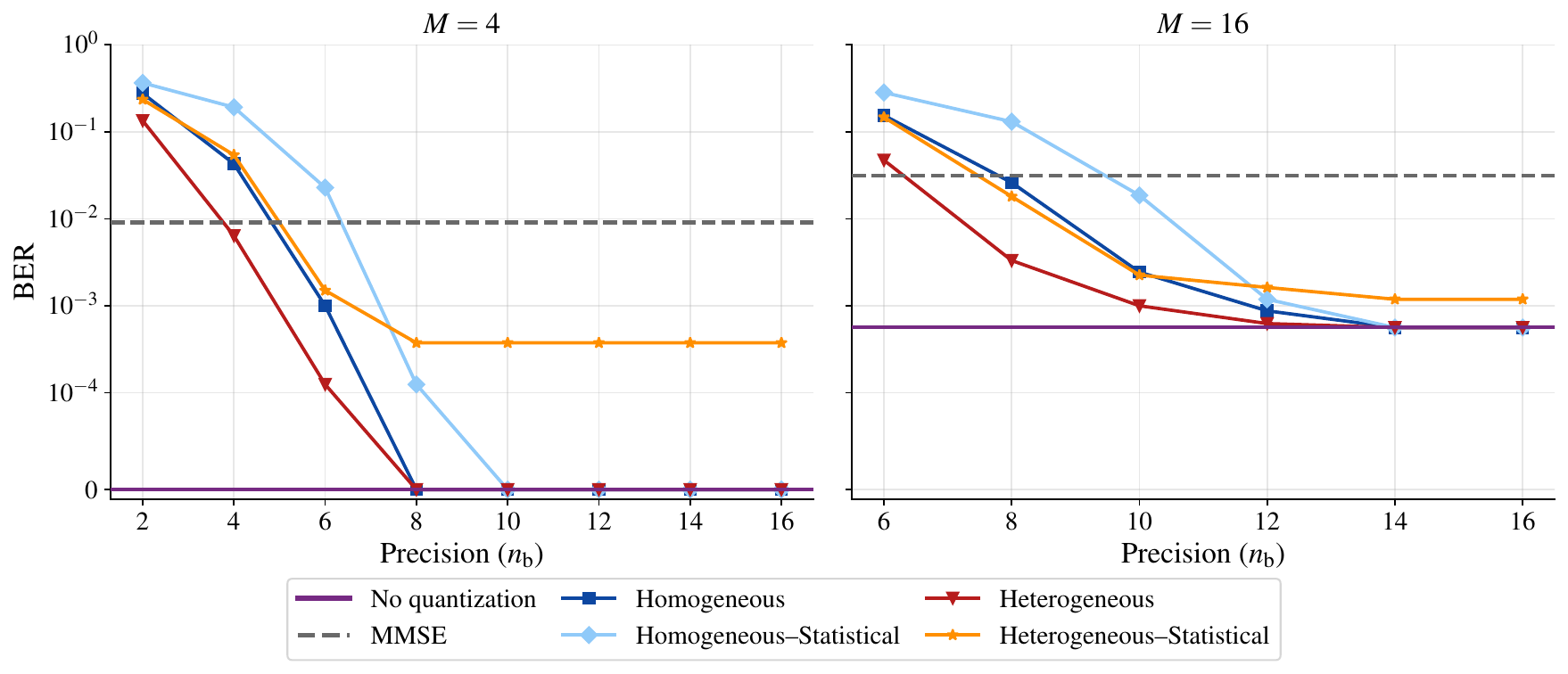}
    \caption{Bit error rate obtained by an exhaustive solver on MIMO systems with $N_t=4$, $N_r=4$, and $E_{\rm b}/N_0=10$~dB.}
    \label{fig:ber_exhaustive}
\end{figure}

\section*{Acknowledgements}
The authors thank Dima Strukov for discussions on hardware architecture designs for QUBO solvers, Thomas Van Vaerenbergh for his valuable input, and Marko Bucyk for his careful review and editing of the manuscript. We thank Hewlett Packard Enterprise for access to their computational resources. This work was supported by DARPA through Air Force Research Laboratory Agreement No. FA8650-23-3-7313. The views expressed are those of the author(s) and do not represent the official views of the Department of Defense or the U.S. Government. S.~H. acknowledges the support of Mitacs.

\bibliographystyle{IEEEtran}
\bibliography{mimo}

\appendices

\section{}\label{app:diagonal}

\emph{Proof of Theorem~\ref{theorem:diagonal}:} Expression~\eqref{eq:QUBO_matrix_entries} gives us the diagonal entries of the QUBO matrix ${\textbf{Q}}$, when $i = j$:
\begin{align}
\label{QUBO_diagonal_entries}
\nonumber
Q_{i,i} &= \sum_{k=1}^{N_r}c_k^*W_{k,i} + c_kW_{k,i}^*+ |W_{k,i}|^2 \\ \nonumber
&=2 \sum_{k=1}^{N_r} \Re(c_k^*W_{k,i}) + |W_{k,i}|^2 \\
&=2 \, \Re \left[\sum_{k=1}^{N_r} W_{k,i}\left(c_k^* + \frac{W^*_{k,i}}{2} \right)\right],
\end{align}
where $\Re(\cdot)$ denotes the real part. From \cref{eq:MIMO_to_QUBO_parameters}, we have
\begin{align}
\label{Bias_term}
\nonumber
c_k&=y_k-\frac{d_{\min}}{2}\cdot(-(\sqrt{M}-1))\cdot(1+\textbf{j})\cdot\sum_{l=1}^{N_t}H_{k,l}\\ \nonumber
&=n_k+\sum_{l=1}^{N_t}x_lH_{k,l}-\frac{d_{\min}}{2} (-(\sqrt{M}-1))(1+\textbf{j})\cdot\sum_{l=1}^{N_t}H_{k,l} \\ \nonumber
&=n_k+\sum_{l=1}^{N_t}(x_l-\frac{d_{\min}}{2}\cdot(-(\sqrt{M}-1))\cdot(1+\textbf{j}))\cdot H_{k,l}\\ 
&=n_k+\sum_{l=1}^{N_t}x^\prime_l\cdot H_{k,l}.
\end{align}
Plugging the right-hand side of \cref{Bias_term} into \cref{QUBO_diagonal_entries}\footnote{$\,$We ignore the $2\,\Re(\cdot)$ portion of \cref{QUBO_diagonal_entries} for now, and consider it later.} yields 
\begin{align}
\label{diag_three_terms}
\nonumber
&\sum_{k=1}^{N_r}\left(c_k^*+\frac{W^*_{k,i}}{2}\right)\cdot W_{k,i}\\ \nonumber &=\sum_{k=1}^{N_r}\left[\left(n_k+\sum_{l=1}^{N_t}x^\prime_l\cdot H_{k,l}\right)^{\!\!*}+\frac{W^*_{k,i}}{2}\right]\cdot W_{k,i}\\ \nonumber
&=\sum_{k=1}^{N_r}\left(\sum_{l=1}^{N_t}(x^\prime_l)^*\cdot H^*_{k,l}+\frac{n_k^*}{N_t}+\frac{W^*_{k,i}}{2N_t}\right)\cdot W_{k,i}\\ \nonumber
&=\sum_{k=1}^{N_r}\sum_{l=1}^{N_t}(x^\prime_l)^*\cdot H^*_{k,l}W_{k,i}+\frac{n^*_k W_{k,i}}{N_t}+\frac{W^*_{k,i} W_{k,i}}{2N_t} \\ \nonumber
&=\sum_{l=1}^{N_t}\sum_{k=1}^{N_r}(x^\prime_l)^*\cdot H^*_{k,l}W_{k,i}+\frac{n^*_k W_{k,i}}{N_t}+\frac{W^*_{k,i}\cdot W_{k,i}}{2N_t} \\ \nonumber
&=\sum_{l=1}^{N_t}(x^\prime_l)^* \sum_{k=1}^{N_r}H^*_{k,l}W_{k,i} \! + \! N_t\sum_{k=1}^{N_r}\frac{n^*_k W_{k,i}}{N_t} \! + \! N_t\sum_{k=1}^{N_r}\frac{|W_{k,i}|^2}{2N_t}\\ 
&=\!\sum_{l=1}^{N_t}(x^\prime_l)^*\! \sum_{k=1}^{N_r}H^*_{k,l} W_{k,i}+\sum_{k=1}^{N_r}n^*_k W_{k,i}\! +\!\frac{1}{2}\sum_{k=1}^{N_r}|W_{k,i}|^2\!.
\end{align}

The right-hand side of \cref{diag_three_terms} consists of three terms. The first term can be written as
\begin{align}
\label{diag_first_term}
\nonumber
&\!\!\!\!\sum_{l=1}^{N_t}(x^\prime_l)^*\cdot\sum_{k=1}^{N_r}H^*_{k,l}W_{k,i}\\ \nonumber
&\!\!=\sum_{l=1}^{N_t}(x^\prime_l)^*\cdot\sum_{k=1}^{N_r}H^*_{k,l}(-H_{k,\lfloor{i/r}\rfloor}\cdot e^{\,\textbf{j}\frac{\pi}{2}\lfloor\frac{i \!\!\!\! \mod \!r}{r/2}\rfloor} \\ \nonumber
& \ \ \cdot 2^{[(i \!\!\!\!\mod
\!r)\!\!\!\!\mod \!\frac{r}{2}]+1}\cdot\frac{d_{\min}}{2})\\ \nonumber
&\!\!=(-e^{\,\textbf{j}\frac{\pi}{2}\lfloor\frac{i\!\!\!\!\mod \!r}{r/2} \rfloor} \cdot 2^{[(i\!\!\!\!\mod\! r)\!\!\!\!\mod \!\frac{r}{2}]+1} \cdot \frac{d_{\min}}{2})\\ \nonumber
& \ \ \cdot \sum_{l=1}^{N_t}(x^\prime_l)^*\cdot\sum_{k=1}^{N_r}H^*_{k,l}\cdot H_{k,\lfloor{i/r} \rfloor}\\ \nonumber
&\!\!=a_i \sum_{l=1}^{N_t}(x^\prime_l)^*\cdot\sum_{k=1}^{N_r}H^*_{k,l}\cdot H_{k,\lfloor{i/r} \rfloor}\\ 
&\!\!=\!a_i \!\!\left[ \! \! \left(\sum_{l=1,l\ne \lfloor{i/r\!} \rfloor}^{N_t}\!\! \! \! (x^\prime_l)^* \!\sum_{k=1}^{N_r}\!H^*_{k,l} H_{k,\!\lfloor{\!i/r\!} \rfloor} \!\! \right) \!\! + \! (x^\prime_{\!\lfloor{\!i/r\!} \rfloor\!})^*\! \sum_{k=1}^{N_r}\!|H_{k,\!\lfloor{\!i/r}\! \rfloor}\!|^2\!\right]\!\!,
\end{align}
where 
\begin{equation} \nonumber
    a_i = -e^{\,\textbf{j}\frac{\pi}{2}\lfloor\frac{i\!\!\!\!\mod\!r}{r/2}\rfloor} \cdot 2^{[(i\!\!\!\! \mod\!r)\!\!\!\!\mod\!\frac{r}{2}]+1} \cdot \frac{d_{\min}}{2}.
\end{equation}

The third term of the right-hand side of \cref{diag_three_terms} can be written as 
\begin{align} \nonumber
\label{diag_third_term}
\frac{1}{2}\!\sum_{k=1}^{N_r}\!|W_{k,i}|^2 & = \frac{1}{2}(2^{[(i\!\!\!\!\mod\!r)\!\!\!\!\mod\! \frac{r}{2}]+1}\!\cdot\!\frac{d_{\min}}{2})^2\!\sum_{k=1}^{N_r}\!|H_{k,\lfloor{i/r} \rfloor}\!|^2 \\
&= \frac{|a_i|^2}{2}\sum_{k=1}^{N_r}|H_{k,\lfloor{i/r} \rfloor}\!|^2.
\end{align}
Plugging the right-hand sides of \cref{diag_first_term} and \cref{diag_third_term} into \cref{diag_three_terms} yields the expression
\begin{align}
\label{diag_three_terms_rev} \nonumber
& a_i \left(\sum_{l=1,l\ne \lfloor{i/r} \rfloor}^{N_t}(x^\prime_l)^*\cdot\sum_{k=1}^{N_r}H^*_{k,l}\cdot H_{k,\lfloor{i/r} \rfloor}\right) +\sum_{k=1}^{N_r}n^*_k W_{k,i} \\ 
& \ \  + \left[a_i \cdot (x^\prime_{\lfloor{i/r} \rfloor})^*+ \frac{|a_i|^2}{2}\right] \cdot \left( \sum_{k=1}^{N_r}|H_{k,\lfloor{i/r} \rfloor}|^2\right)\!\!.
\end{align}
To find the probability distribution function of expression~\eqref{diag_three_terms_rev}, we first find the mean and the variance of each of its three terms, and then apply the central limit theorem (CLT). For the first term, let us start with its mean:
\begin{align} \nonumber
\label{first_term_mean}
&\mathbb{E}\left(\sum_{l=1,l\ne \lfloor{i/r} \rfloor}^{N_t}(x^\prime_l)^*\cdot\sum_{k=1}^{N_r}H^*_{k,l}\cdot H_{k,\lfloor{i/r} \rfloor}\right) \\
 &\ \ = \! \sum_{l=1,l\ne \lfloor{i/r} \rfloor}^{N_t}\sum_{k=1}^{N_r} \mathbb{E}((x^\prime_l)^*) \cdot \mathbb{E}(H^*_{k,l})\cdot \mathbb{E}(H_{k,\lfloor{i/r} \rfloor})=0,
\end{align}
because $H^*_{k,l}$ and $H_{k,\lfloor{i/r} \rfloor}$ are independent and have a mean of zero.
For the variance of the first term, note that for any values of $k$ and $l$, 
\begin{equation}
\mathbb{E}(|H^*_{k,l} H_{k,\lfloor{i/r} \rfloor}|^2)
=\mathbb{E}(|H^*_{k,l}|^2)\cdot \mathbb{E}(|H_{k,\lfloor{i/r} \rfloor}|^2)=1 
\end{equation}
and 
\begin{equation}
\mathbb{E}(|x^\prime_l|^2) = 2\cdot \frac{2\sqrt{M}-1}{\sqrt{M}+1}.
\end{equation}
Thus, 
\begin{equation} \label{eq: first term var}
\mathrm{Var}\left((x^\prime_l)^*\cdot\sum_{k=1}^{N_r}H^*_{k,l}\cdot H_{k,\lfloor{i/r} \rfloor}\right)= 2\cdot \frac{2\sqrt{M}-1}{\sqrt{M}+1}\cdot N_r.
\end{equation}
For the second term in expression~\eqref{diag_three_terms_rev}, we have
\begin{equation} \label{eq: second term mean}
\mathbb{E}(n^*_kW_{k,i})=\mathbb{E}(n^*_k)\cdot\mathbb{E}(W_{k,i})=0,
\end{equation}
and its variance is 
\begin{equation} \label{eq: second term var}
\mathrm{Var}(n^*_kW_{k,i}) =\mathbb{E}(|n^*_k|^2)\cdot \mathbb{E}(|W_{k,i}|^2)=\sigma^2_n\cdot|a_i|^2,
\end{equation}
where $\sigma^2_n$ is the variance of the noise. The third term in expression~\eqref{diag_three_terms_rev} can be written as
\begin{equation}
\sum_{k=1}^{N_r}(a_i \cdot (x^\prime_{\lfloor{i/r} \rfloor})^*+ \frac{|a_i|^2}{2})\cdot|H_{k,\lfloor{i/r} \rfloor}|^2.
\end{equation}
Now, for any $i$ and $k$, since $\mathbb{E}(|H_{k,\lfloor{i/r} \rfloor}|^2) = 1$, the expected value of the third term is
\begin{align} \label{eq: third term mean}
\nonumber
g_i & = \mathbb{E}\left(\left[a_i \cdot (x^\prime_{\lfloor{i/r} \rfloor})^*+ \frac{|a_i|^2}{2}\right]\cdot|H_{k,\lfloor{i/r} \rfloor}|^2\right)\\ \nonumber
&=\mathbb{E}\left(\left[{a_i \cdot (x^\prime_{\lfloor{i/r} \rfloor})^*+ \frac{|a_i|^2}{2}}\right]\right)\cdot\mathbb{E}(|H_{k,\lfloor{i/r} \rfloor}|^2)\\ 
&=\!\left(a_i\left[\sqrt{\frac{6}{M-1}}\cdot\frac{(\sqrt{M}-1)}{2} \cdot (1-\textbf{j})\right] +\frac{|a_i|^2}{2}\right)\!\!.
\end{align}
We calculate the variance of the third term in expression~\eqref{diag_three_terms_rev}:
\begin{align} \label{eq: var third term} \nonumber
& \mathrm{Var}\left(\left[a_i \cdot (x^\prime_{\lfloor{i/r} \rfloor})^*+ \frac{|a_i|^2}{2}\right]\cdot|H_{k,\lfloor{i/r} \rfloor}|^2\right)\\ 
& \ \ =\mathbb{E}\left(\left|\left[a_i \cdot (x^\prime_{\lfloor{i/r} \rfloor})^*+ \frac{|a_i|^2}{2}\right]\cdot|H_{k,\lfloor{i/r} \rfloor}|^2\right|^2\right)-|g_i|^2. 
\end{align}
The first term of \cref{eq: var third term} is equal to
\begin{align}
\nonumber
f_i & = \mathbb{E}\left(\left|\left[a_i \cdot (x^\prime_{\lfloor{i/r} \rfloor})^*+ \frac{|a_i|^2}{2}\right]\right|^2\right)\cdot  \mathbb{E}(|H_{k,\lfloor{i/r} \rfloor}|^4) \\ \nonumber
& = \left(|a_i|^2 \cdot 2\cdot \frac{2\sqrt{M}-1}{\sqrt{M}+1} + \frac{|a_i|^4}{4} \right. \\  
&\,\,\left.\,+\, 2\,\Re\!\left[a_i\cdot\frac{|a_i|^2}{2}\cdot  \sqrt{\frac{6}{M-1}}\cdot\frac{(\sqrt{M}-1)}{2} \cdot (1-\textbf{j}) \right]\right) \cdot 2.
\vspace{-0.5em}
\end{align}
\vspace{-0.5em}
\noindent Thus,
 \begin{equation} \label{eq: third term var}
\mathrm{Var}\left(\left[a_i \cdot (x^\prime_{\lfloor{i/r} \rfloor})^*+ \frac{|a_i|^2}{2}\right]\cdot|H_{k,\lfloor{i/r} \rfloor}|^2\right) = f_i - |g_i|^2.
\end{equation}
By using the mean and the variance of expression~\eqref{diag_three_terms_rev}'s terms from Eqs.~\eqref{first_term_mean}, \eqref{eq: first term var}, \eqref{eq: second term mean}, \eqref{eq: second term var}, \eqref{eq: third term mean}, and~\eqref{eq: third term var}, and applying the CLT, the PDF of the sum~\eqref{diag_three_terms_rev} can be estimated by
\begin{align}
\label{final_res_diag}
\nonumber
& \mathcal{CN}[(N_r \cdot g_i), \ \   (|a_i|^2\cdot (N_t-1) \cdot 2\cdot \frac{2\sqrt{M}-1}{\sqrt{M}+1}\cdot N_r \\ 
&\ \ + N_r\cdot \sigma^2_n\cdot|a_i|^2 + N_r (f_i - |g_i|^2))].
\end{align}
Finally, if we apply 
$2\,\Re(\cdot)$
to distribution~\eqref{final_res_diag}, we have
\begin{equation}
\begin{aligned}
\label{diag_term_final}
Q_{i,i}\sim &\ \mathcal{N}[\left(2\cdot N_r \cdot \Re(g_i)\right), \\ 
& \ \ (2 (|a_i|^2\cdot (N_t-1) \cdot 2\cdot \frac{2\sqrt{M}-1}{\sqrt{M}+1}\cdot N_r\\ 
& \ \ \ \ + N_r\cdot \sigma^2_n\cdot|a_i|^2 + N_r (f_i - |g_i|^2))].
\end{aligned}
\end{equation}
\section{}\label{app:off-diagonal}
\subsection{Proof of Theorem~\ref{theorem:off-diagonal}: -- Case 1}
\noindent From expression~\eqref{eq:QUBO_matrix_entries}, for $i \neq j$, we have
\begin{align}
\label{QUBO_non_diag} \nonumber
Q_{i,j}&=\sum_{k=1}^{N_r}W_{k,i}^*W_{k,j}+W_{k,j}^*W_{k,i}=2\sum_{k=1}^{N_r}\Re(W_{k,i}^*W_{k,j})\\
&=2\,\Re\left(\sum_{k=1}^{N_r}W_{k,i}^*W_{k,j}\right)\!\!.
\end{align}
Since $H_{k,\lfloor i/r \rfloor} \sim \mathcal{CN}[0,\, 1]$, and using \cref{eq:MIMO_to_QUBO_parameters_2}, we have
\begin{align}
\nonumber \label{W_distribution}
W_{k,i}&=-H_{k,\lfloor i/r \rfloor}\cdot e^{\,\textbf{j}\frac{\pi}{2}(\frac{i\!\!\!\!\mod \!r}{r/2})}\cdot 2^{[(i\!\!\!\!\mod \!r)\!\!\!\!\mod
\!\frac{r}{2}]+1}\cdot \frac{d_{\min}}{2}\\
& \sim \mathcal{CN}[0,\ |a_i|^2].
\end{align}
$W_{k,i}$ and $W_{k,j}$ are functions of $H_{k,\lfloor i/r \rfloor}$ and $H_{k,\lfloor j/r \rfloor}$, respectively. If $\lfloor i/r \rfloor\ne \lfloor j/r \rfloor$, $W_{k,i}$ and $W_{k,j}$ correspond to different entries of $\mathbf{H}$ and are independent. Thus, we have
\begin{align}
\mathbb{E}(W_{k,i}^*W_{k,j}) = \mathbb{E}(W_{k,i}^*)\mathbb{E}(W_{k,j})=0
\end{align}
and
\begin{align}
\mathrm{Var}(W_{k,i}^*W_{k,j})= \mathbb{E}(|W_{k,i}^*|^2)\mathbb{E}(|W_{k,j}|^2)= |a_i|^2\cdot|a_j|^2. 
\end{align}
Now, by applying the CLT to \cref{QUBO_non_diag}, for $i\ne j$, with \mbox{$\lfloor i/r \rfloor\ne \lfloor j/r \rfloor$,} we arrive at
%
\begin{equation}
\label{QUBO_non_diag_case1_result}
Q_{i,j} = \Re\left(\sum_{k=1}^{N_r}W_{k,i}^*W_{k,j}\right) \sim \mathcal{N}[0,\  \left(2\cdot N_r\cdot|a_i|^2\cdot|a_j|^2\right)].
\end{equation}

\subsection{Proof of Theorem~\ref{theorem:off-diagonal} -- Case 2}
\noindent Taking into account $i=j$ and $\lfloor i/r \rfloor\ne \lfloor j/r \rfloor$, we have
\begin{align}
\label{QUBO_non_diag_case2} \nonumber
&W_{k,i}^*W_{k,j} = |H_{k,\lfloor i/r \rfloor}|^2\cdot e^{\,\textbf{j}\frac{\pi}{2}(\frac{j\!\!\!\!\mod\! r}{r/2}-\frac{i\!\!\!\!\mod\! r}{r/2})}\\
&\ \ \cdot2^{[(i\!\!\!\!\mod\! r)\!\!\!\!\mod\! \frac{r}{2}]\,+\,[(j\!\!\!\!\mod\! r)\!\!\!\!\mod\! \frac{r}{2}]\,+\,2}\cdot\biggl(\frac{d_{\min}}{2}\biggl)^2.
\end{align}
If $\frac{j\!\!\mod r}{r/2}\ne\frac{i\!\!\mod r}{r/2}$, $W_{k,i}^*W_{k,j}$ will always be imaginary. Thus, when we apply $2\,\Re(\cdot)$, the result will always be 0. 

\subsection{Proof of Theorem~\ref{theorem:off-diagonal} -- Case 3}
\noindent Similarly to case 2, consider $\lfloor i/r \rfloor=\lfloor j/r \rfloor$, but this time assume $\frac{j\!\!\mod r}{r/2}=\frac{i\!\!\mod r}{r/2}$. We then have
\begin{align}
\label{QUBO_non_diag_case3}
\nonumber
W_{k,i}^*W_{k,j}  = &   \,\,|H_{k,\lfloor i/r \rfloor}|^2 \\ \nonumber
& \cdot2^{[(i\!\!\!\!\mod\! r)\!\!\!\!\mod\!\frac{r}{2}]\,+\,[(j\!\!\!\!\mod\! r)\!\!\!\! \mod\!\frac{r}{2}]\,+\,2}\!\cdot\!\biggl(\frac{d_{\min}}{2}\biggl)^{\!2}\\ \nonumber
= & \ |H_{k,\lfloor i/r \rfloor}|^2\cdot |a_i|\cdot|a_j|.
\end{align}
Now, after applying the CLT, we have
\begin{equation}
\label{QUBO_non_diag_case3_result}
Q_{i,j}\sim \mathcal{N}[(2\cdot N_r\cdot|a_i|\cdot|a_j|),\ (4\cdot N_r\cdot|a_i|^2\cdot|a_j|^2)].
\end{equation}

\end{document}